\documentclass[journal, draftcls, onecolumn, 11pt]{IEEEtran}

\usepackage{amsmath, amssymb,amsthm}
\usepackage{mathtools}
\usepackage{inputenc, graphicx, verbatim,color, enumerate}
\usepackage{float}
\usepackage{tikz,pgfplots}
\usepackage{url}
\usepackage{xcolor,color}
\usepackage{cite}
\usepackage{makecell}


\newtheorem{theorem}{Theorem}
\newtheorem{proposition}{Proposition}
\newtheorem{lemma}{Lemma}
\newtheorem{corollary}{Corollary}
\newtheorem{definition}{Definition}
\newtheorem{example}{Example}

\newcommand{\F}{\mathbb{F}}

\newcommand{\R}{\mathbb{R}}
\newcommand{\Z}{\mathbb{Z}}

\newcommand{\GG}{\mathcal{G}}

\newcommand{\CC}{\mathcal{C}}
\newcommand{\RR}{\mathcal{R}}

\newcommand{\cl}{\mathrm{cl}}
\newcommand{\opt}{\mathrm{opt}}
\newcommand{\ie}{\emph{i.e.}}

\newcommand{\bm}[1]{\mathbf{#1}}

\newcommand\SmallMatrix[1]{{%
  \small\arraycolsep=0.6\arraycolsep\ensuremath{\begin{pmatrix}#1\end{pmatrix}}}}

\setcounter{MaxMatrixCols}{20}

\hyphenation{op-tical net-works semi-conduc-tor}

\usepackage{hyperref}

\begin{document}
\title{Alphabet-Dependent Bounds for Linear Locally Repairable Codes Based on Residual Codes}

\author{%
    Matthias~Grezet,
    Ragnar~Freij-Hollanti, 
    Thomas~Westerb\"{a}ck, 
    and Camilla~Hollanti%
  \thanks{The work of M. Grezet and C. Hollanti was supported in part by the Academy of Finland, under grants \#276031, \#282938, and \#303819, and by the Technical University of Munich -- Institute for Advanced Study, funded by the German Excellence Initiative and the EU 7th Framework Programme under grant agreement \#291763, via a Hans Fischer Fellowship.}
  \thanks{R. Freij-Hollanti was supported by the German Research Foundation (Deutsche Forschungsgemeinschaft, DFG) under grant WA3907/1-1.}
  \thanks{M. Grezet, R. Freij-Hollanti, and C. Hollanti are with the Department of Mathematics and Systems Analysis, Aalto University, FI-00076 Aalto, Finland (e-mail: matthias.grezet@aalto.fi; ragnar.freij@aalto.fi; camilla.hollanti@aalto.fi).}
  \thanks{T. Westerb\"ack is with the Division of Applied Mathematics, UKK, M\"alardalen University, H\"ogskoleplan 1, Box 883, 721 23 V\"aster\r{a}s, Sweden (e-mail: thomas.westerback@mdh.se).}
  \thanks{0018-9448 \textcopyright \, 2019 IEEE. Personal use is permitted, but republication/redistribution requires IEEE permission. See http://www.ieee.org/publications\_standards/publications/rights/index.html for more information.} 
  }


\maketitle

\begin{abstract}
Locally repairable codes (LRCs) have gained significant interest for the design of large distributed storage systems as they allow a small number of erased nodes to be recovered by accessing only a few others. Several works have thus been carried out to understand the optimal rate--distance tradeoff, but only recently the size of the alphabet has been taken into account. In this paper, a novel definition of locality is proposed to keep track of the precise number of nodes required for a local repair when the repair sets do not yield MDS codes. Then, a new alphabet-dependent bound is derived, which applies both to the new definition and the initial definition of locality. The new bound is based on consecutive residual codes and intrinsically uses the Griesmer bound. A special case of the bound yields both the extension of the Cadambe-Mazumdar bound and the Singleton-type bound  for codes with locality $(r,\delta)$, implying that the new bound is at least as good as these bounds. Furthermore, an upper bound on the asymptotic rate--distance tradeoff of LRCs is derived, and yields the tightest known upper bound for large relative minimum distances. Achievability results are also provided by deriving the locality of the family of Simplex codes together with a few examples of optimal codes. 
\end{abstract}


%
\IEEEpeerreviewmaketitle



\section{Introduction}
%
%
%

In recent times, many service providers allow users to access and store data remotely to avoid overwhelming the limited storage capacity of single users. This leads naturally to the design of large distributed storage systems that reliably store data while minimizing the redundancy necessary to deal with server failures. 

The use of erasure-correcting codes together with network coding techniques for distributed storage systems, initiated in \cite{dimakis10}, has become popular since these so-called regenerating codes achieve the optimal tradeoff between the required repair bandwidth and storage overhead. For a standard erasure code of length $n$, dimension $k$ and minimum distance $d$, any $d-1$ failures can be repaired by contacting at most $k$ other nodes. In addition to this property, and at the cost of the failure tolerance, regenerating codes also enable efficient repair of failed nodes. This was long thought to be in contrast to the traditional maximum distance separable (MDS) codes that have to reconstruct the whole file in order to repair a single node. However,  \cite{guruswami17, rawat18} showed that this claim is not true in general, namely some MDS codes can also be efficiently repaired. Nevertheless,  the number of nodes contacted for repair can be a bottleneck for the system efficiency. To reduce the repair network traffic,  \cite{gopalan12} and later \cite{papailiopoulos12} introduced the notion of locality $r$ allowing the repair of a single failure to be done by contacting only $r$  nodes  with $r \ll k$. Erasure codes satisfying this requirement are called locally repairable (or recoverable) codes (LRCs).

A natural extension was presented in \cite{prakash12}, \cite{kamath13}  where the authors defined the locality $(r,\delta)_{i}$ for the information symbols to allow $\delta-1$ failures to be still corrected locally. This requirement was extended in \cite{prakash12} to all symbols without differentiating between the information symbols and the parity symbols. In this paper, we focus only on all-symbol locality and therefore drop the specification. A \emph{repair set} $R$ of an LRC is a set of coordinates such that any $\delta-1$ code symbols $c_{i}$ with $i \in R$ can be obtained from the remaining code symbols with indices in $R$. Other extensions of the locality property include codes with availability \cite{wang14b}, sequential repair of several erasures \cite{prakash14}, cooperative repair \cite{rawat15}, local repair on graphs \cite{mazumdar15} and many others. 

Abundant literature has been devoted to understanding the best possible parameters of LRCs and providing optimal constructions. The authors of \cite{gopalan12} gave the first tradeoff between the parameters $n,k,d$, and $r$ by showing that the minimum distance $d$ of an $(n,k,d,r)$-LRC with locality $r$ is bounded as follows:
\begin{equation}
\label{eq:Gopalan}
d \leq n-k-\left\lceil \frac{k}{r} \right\rceil +2.
\end{equation}

This bound was extended in \cite{prakash12} for any $(n,k,d,r,\delta)$-LRC with locality $(r,\delta)$ : 

\begin{equation}
\label{eq:Prakash}
d \leq n-k +1 - \left( \left\lceil \frac{k}{r} \right\rceil -1 \right) (\delta-1).
\end{equation}
The two bounds have been proven to be tight for large alphabet size with constructions provided in \cite{papailiopoulos12,prakash12, kamath13, rawat16, huang07, kamath13b, rawat14b, tamo16b, tamo14, goparaju14, ernvall16, westerback16}. Bounds for codes with availability were established in \cite{wang14b, rawat16, tamo16, huang16}. For a summary on various bounds for LRCs, see \cite{freij18}.  

The pioneering work done in \cite{cadambe15} improves on the bound \eqref{eq:Gopalan} by including a dependence on the alphabet size in the bound, that is, for any $(n,k,d,r)$-LRC over the alphabet $Q$ with $|Q|=q$, we have
\begin{equation}
\label{eq:CM}
k \leq \min\limits_{1 \leq t \leq n/(r+1)} \left\{ t r + k_{\opt}^{(q)}(n-t(r+1),d) \right\},
\end{equation}
where $k_{\opt}^{(q)}(n,d)$ is the maximal dimension of a code over $Q$ of length $n$ and minimum distance $d$. This has led to further construction of optimal LRCs over small alphabets, for example in \cite{silberstein18, zeh15}. 

Recently, the authors of \cite{agarwal18} proposed the first alphabet-dependent bounds on $(n,k,d,r,\delta)$-LRCs over $Q$ using an upper bound $B_{l-c}(r+\delta-1,\delta)$ on the cardinality of a code given its length $r+\delta -1$ and minimum distance $\delta$, with the extra requirement that the upper bound is a log-convex function on the length. The global bound is as follows:  
\begin{equation}
\label{eq:ABHMT}
k \leq \left( \left\lceil \frac{n-d+1}{r+\delta-1} \right\rceil +1 \right) \log_{q} B_{l-c}(r+\delta-1,\delta).
\end{equation}
A linear-programming bound for LRCs with locality $(r,\delta)$ was also derived in \cite{agarwal18} under the extra assumption that the repair sets are disjoint.

Finally, in \cite{grezet18}, the authors presented a Singleton-type bound for binary linear LRCs. This bound uses the local dimension of a repair set instead of the parameter $r$ and a more precise understanding of the intersection between two repair sets. As such, the work in this paper generalizes these two ideas.

While so far no distinction was made between non-linear and linear codes, the next results in this paragraph are only true for linear codes. In \cite{griesmer60}, Griesmer  proved the existence of a residual code for any binary linear code (over $\F_{2}$), \ie , a code obtained by a restriction with certain specific parameters. Griesmer then derived a lower bound on the length of the code given its dimension and minimum distance. The two results were later extended to an arbitrary field $\F_{q}$ in \cite{solomon65}. We present here the most general form. For any $[n,k,d]$ linear code $\CC$ over $\F_{q}$, there exists $\CC'$, a restriction of $\CC$ called the \emph{residual code of $\CC$}, such that $\CC'$ has parameters $[n-d, k-1, d' \geq \left\lceil d/q \right\rceil]$. By recursively taking residual codes, the authors of \cite{solomon65} obtained the following bound on the length $n$ of a linear code, known as the \emph{Griesmer bound}, and denoted here by $\GG(k,d)$:
\begin{equation}
\label{eq:Griesmer}
n \geq \sum\limits_{i=0}^{k-1} \left\lceil \frac{d}{q^{i}} \right\rceil =: \GG(k,d).
\end{equation}

\subsection{Our contributions}

In this paper, we first highlight the differences between the initial motivation for introducing the notion of locality in \cite{gopalan12, papailiopoulos12} and the definition of locality given in \cite{prakash12}, where the authors constrained the size of the repair sets. We show, through some examples, how the definition in \cite{prakash12} yields only a loose upper bound on the number of nodes contacted during the repair process when $\delta$ is larger than the alphabet size of the code. To remedy this, we introduce a new definition for locality called dimension-locality and compare it to the definition of locality in \cite{prakash12}. 

Then, we focus on linear LRCs and derive a new alphabet-dependent bound of the type of the bound \eqref{eq:CM} for linear codes with dimension-locality using the repair sets and chains of consecutive residual codes. Given the definition of dimension-locality, this bound also applies to linear LRCs with locality $(r,\delta)$ by using a weaker bound on the dimension of the local codes.  As a corollary of our results, we also derive a new Singleton-type bound that reflects better the actual dimension of the local codes. Furthermore, the new bound can be used to obtain a straightforward extension of the bound \eqref{eq:CM} for locality $(r,\delta)$ and the bound \eqref{eq:Prakash}, which shows that our bound is always at least as good as these bounds.

Next, we derive the asymptotic formulas of the new bound and the new Singleton-type bound when $n \to \infty$ to obtain the bounds on the tradeoff between the rate and the relative minimum distance. We also use these formulas to compare our bounds to the bounds in \cite{agarwal18} (Equation \eqref{eq:ABHMT} and \eqref{eq:ABHMT_asympt} here).
We show that there are cases where the new asymptotic Singleton-type bound is either better, equal, or worse than the asymptotic version of the bound \eqref{eq:Prakash}. The comparison with our main bound \eqref{eq:CMG_r} is more direct as we can prove that there is always an interval in the relative minimum distance where the new bound is strictly better than the bound \eqref{eq:Prakash}. Moreover, the improvement is quite significant since our bound benefits from the locality-unaware bounds on the rate--distance tradeoff. As an example, Figure \ref{fig:asympt_comp} displays the comparison between the known bounds and the new bound \eqref{eq:CMG_asympt} for linear LRCs with locality $(4,3)$ over the binary field, where we use the McEliece-Rodemich-Rumsey-Welch (MRRW) bound in \cite{McEliece77} as the intrinsic bound on the rate. Finally, we prove the achievability of the new bounds by studying the locality of the Simplex codes and providing a few optimal examples.

\begin{figure}
\centering
\includegraphics[height=6.5cm]{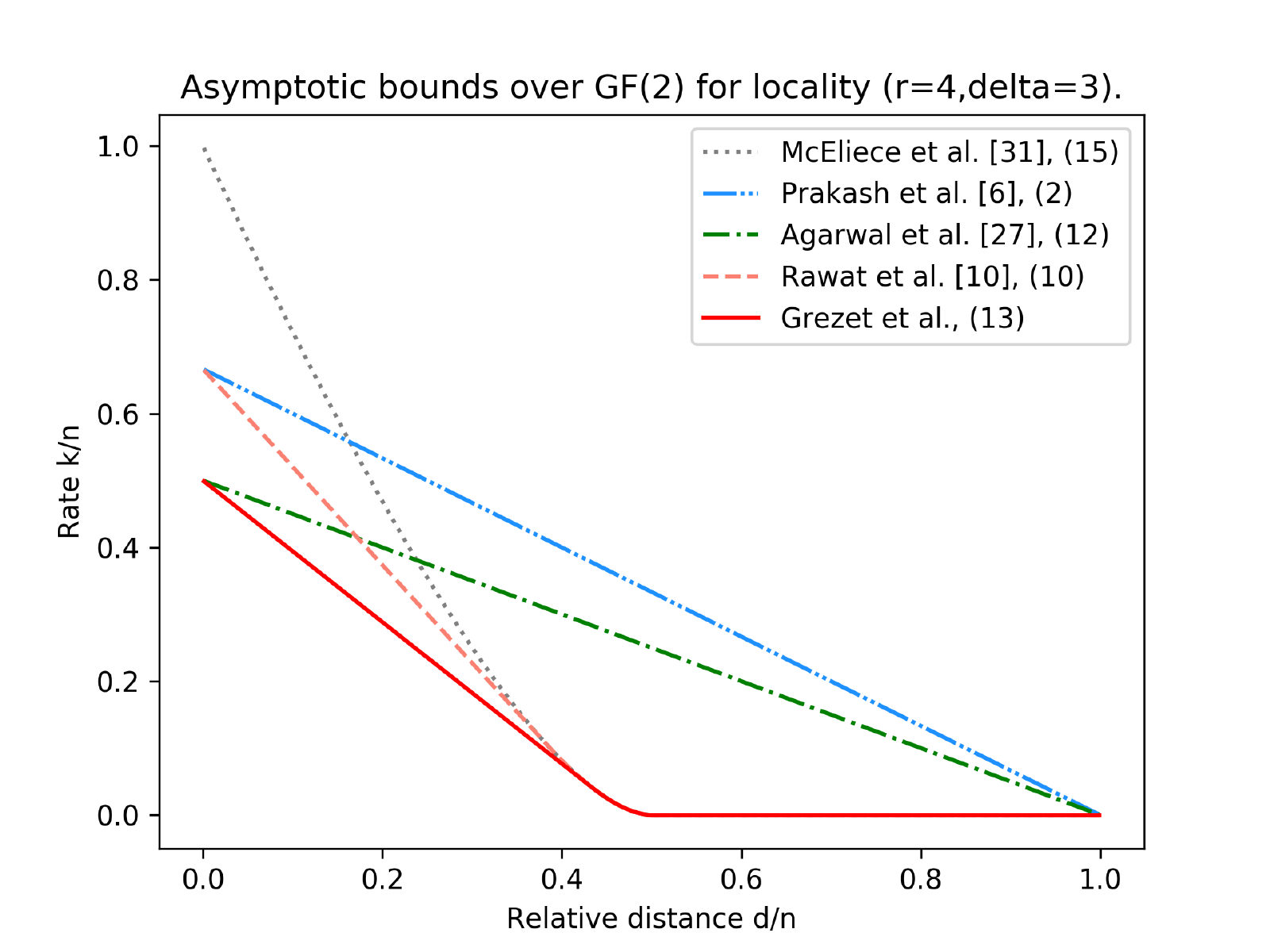}
\caption{Comparison of the upper bounds on the tradeoff between the rate $k/n$ and the relative minimum distance $d/n$ from \cite{prakash12, agarwal18, McEliece77}, and the bounds \eqref{eq:CM_rdelta} and \eqref{eq:CMG_asympt} over the binary field for large values of $n$ and fixed locality $(r=4,\delta=3)$.}
\label{fig:asympt_comp}
\end{figure}

The rest of the paper is organized as follows. In Section \ref{section:definitions}, we discuss the relation between the initial motivation that led to the introduction of locality and the definition given in \cite{prakash12}. Then, we define the notion of dimension-locality and compare it to the original definition of locality in \cite{prakash12}. In Section \ref{section:CMG}, we derive a new bound for linear LRCs with dimension-locality and extend it to linear LRCs with locality $(r,\delta)$. We derive also a new Singleton-type bound for these codes. In Section \ref{section:analysis}, we prove first that our bound is always as good as the straightforward extension of the bound \eqref{eq:CM} for locality $(r,\delta)$ and the bound \eqref{eq:Prakash}. Then, we derive the asymptotic formulas for the new bounds and use them for the comparison to the bound \eqref{eq:ABHMT}. While the comparison with the new Singleton-type bound depends on the parameters of the codes, we prove that our bound always improves on the bound \eqref{eq:ABHMT} for large relative minimum distances. Finally, in Section \ref{section:simplex}, we provide a family of codes that achieves our bounds by studying the locality of the Simplex codes.



\section{Mathematical preliminaries and locality revisited}
\label{section:definitions}

We denote the set $\{1,2,\ldots, n\}$ by $[n]$ and the set of all subsets of $[n]$ by $2^{[n]}$. The set of all positive integers including $0$ is denoted by $\Z_{+}$. For a length-$n$ vector $\mathbf{v}$ and a set $I \subseteq [n]$, the vector $\mathbf{v}_{I}$ denotes the restriction of the vector $\mathbf{v}$ to the coordinates in the set $I$. A linear code $\CC$ of length $n$, dimension $k$, and minimum distance $d$ is denoted by $[n,k,d]$ and a generator matrix for $\CC$ is $G_{\CC} = (\mathbf{g}_{1}, \ldots , \mathbf{g}_{n})$ where $\mathbf{g}_{i} \in \F_{q}^{k}$ is a column vector for $i \in [n]$. The number of codewords in $\CC$ is the cardinality of $\CC$, $|\CC| = q^{k}$. The \emph{shortening} of a code $\CC$ to the set of coordinates $I \subseteq [n]$ is defined by $$\CC/I = \{ \mathbf{c}_{[n]\setminus I} : \mathbf{c} \in \CC \text{ such that } c_{i}=0 \text{ for all } i \in I\},$$ and the \emph{restriction} of code $\CC$ to  $I$ is defined by $$\CC|_{I}=\{\mathbf{c}_{I} : \mathbf{c} \in \CC \}.$$ 
For convenience, we shortly call  the codes obtained by a restriction \emph{restricted codes}. Equivalently, the \emph{puncturing} of $\CC$ on the set $I$ is defined as the restriction of $\CC$ to the set $[n]-I$.
For an $[n,k,d]$ linear code $\CC$, if $\CC$ meets the Singleton bound, \ie , if $d=n-k+1$, then $\CC$ is called a maximum distance separable (MDS) code. 

To measure the dimension of restricted linear codes, or more generally, the amount of information contained in the restriction of arbitrary codes, we use the notion of an entropy function on the subsets $I \subseteq [n]$, where $n$ is the length of the code. We state it here for quasi-uniform codes over the alphabet $Q$. 

Quasi-uniform codes are a general class of error-correcting codes containing all the linear codes, group codes, and almost affine codes. More precisely, let $\CC$ be an arbitrary subset of $Q^{n}$. We can associate to $\CC$ a random vector $\bm{X}=(X_1, \ldots, X_n)$ with joint probability distribution
\[
P(\bm{X}=\bm{x})=\left\lbrace
\begin{array}{ll}
    1/ |\CC| & \text{if } \bm{x} \in \CC,\\
    0 & \text{otherwise.}
\end{array}
\right.
\]
Then, $\CC$ is a \emph{quasi-uniform} code if for all $A\subseteq [n]$, the restricted random vector $\bm{X}_{A}$ is uniformly distributed over its support $\lambda(\bm{X}_{A})=\{\bm{x}_{A} : P(\bm{X}_{A}=\bm{x}_{A})>0\}$. We refer to \cite{chan10} for more information about quasi-uniform codes and \cite{westerback18} for the entropy function on these codes. 

\begin{definition}
Let $\CC$ be a quasi-uniform code of length $n$ over the alphabet $Q$ and $I \subseteq [n]$. The entropy associated to $\CC$ is the function $H_{\CC} : [n] \to \R$ with
\[
H_{\CC}(I) = \frac{\log(|\CC|_{I}|)}{\log |Q|} = \frac{\log( |\{ \mathbf{c}_{I} : \mathbf{c} \in \CC \} | )}{\log |Q|}.
\]
\end{definition}

For ease of notation, if the underlying code of $H_{\CC}$ is clear, we drop the specification to $\CC$. For linear codes, this function measures exactly the dimension of the restricted codes and for a subset $I \subseteq [n]$, $H_{\CC}(I)$ is equivalent to the rank of the sub-matrix formed by the columns $\mathbf{g}_{i}$ with $i \in I$ or, equivalently, to the rank function of $I$ in the associated matroid of $\CC$. The entropy function has the following polymatroidal properties as shown in \cite{fujishige78}.

\begin{proposition}
Let $\CC$ be a quasi-uniform code of length $n$ over the alphabet $Q$ and $H$ the entropy function associated to $\CC$. For $I,J \subseteq [n]$, we have
\begin{enumerate}
\item $H(I) \leq |I|$,
\item If $I \subseteq J$ then $H(I) \leq H(J)$,
\item $H(I \cup J) + H(I \cap J) \leq H(I) + H(J)$.
\end{enumerate}
\end{proposition}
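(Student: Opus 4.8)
Properties (1) and (2) are immediate from the definition of $H_{\CC}$ and require neither quasi-uniformity nor any entropy inequality: since $\CC|_{I} \subseteq Q^{|I|}$ we have $|\CC|_{I}| \leq |Q|^{|I|}$, whence $H_{\CC}(I) = \log_{|Q|}|\CC|_{I}| \leq |I|$; and if $I \subseteq J$ then $\CC|_{I}$ is the coordinate projection of $\CC|_{J}$ onto the positions in $I$, so $|\CC|_{I}| \leq |\CC|_{J}|$ and $H_{\CC}(I) \leq H_{\CC}(J)$. The only substantive claim is the submodular inequality (3), which after exponentiating base $|Q|$ is the statement $|\CC|_{I\cup J}| \cdot |\CC|_{I\cap J}| \leq |\CC|_{I}| \cdot |\CC|_{J}|$.

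My plan for (3) is to reduce it to the strong subadditivity of Shannon entropy, which is Fujishige's theorem \cite{fujishige78} in disguise. Write $\mathbb{H}(\bm{X}_{A})$ for the Shannon entropy of the marginal random vector $\bm{X}_{A}$, with logarithms to base $|Q|$. The one key observation is that quasi-uniformity makes $H_{\CC}$ a genuine entropy: the support $\lambda(\bm{X}_{A})$ is exactly the set of $\bm{x}_{A}$ that extend to a codeword, so $|\lambda(\bm{X}_{A})| = |\CC|_{A}|$, and since $\bm{X}_{A}$ is \emph{uniform} on $\lambda(\bm{X}_{A})$ its entropy is $\mathbb{H}(\bm{X}_{A}) = \log_{|Q|}|\lambda(\bm{X}_{A})| = H_{\CC}(A)$. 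Granting this, (3) follows from the chain rule applied twice, $\mathbb{H}(\bm{X}_{I\cup J}) = \mathbb{H}(\bm{X}_{I}) + \mathbb{H}(\bm{X}_{J\setminus I} \mid \bm{X}_{I})$ and $\mathbb{H}(\bm{X}_{J}) = \mathbb{H}(\bm{X}_{I\cap J}) + \mathbb{H}(\bm{X}_{J\setminus I} \mid \bm{X}_{I\cap J})$, together with the fact that conditioning on a larger family of variables cannot increase entropy; since $I \cap J \subseteq I$ this gives $\mathbb{H}(\bm{X}_{J\setminus I} \mid \bm{X}_{I}) \leq \mathbb{H}(\bm{X}_{J\setminus I} \mid \bm{X}_{I\cap J})$, and subtracting and rearranging yields $\mathbb{H}(\bm{X}_{I\cup J}) + \mathbb{H}(\bm{X}_{I\cap J}) \leq \mathbb{H}(\bm{X}_{I}) + \mathbb{H}(\bm{X}_{J})$.

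The main (indeed the only) obstacle is the identification in the previous paragraph: one must be careful that the combinatorial quantity $\log_{|Q|}|\CC|_{A}|$ appearing in the definition really is a Shannon entropy, which is precisely what quasi-uniformity guarantees. Without it, $\log_{|Q|}|\CC|_{A}|$ is merely the Hartley (max-)entropy of $\bm{X}_{A}$ and need not be submodular, so the hypothesis cannot be dropped. I expect no difficulty beyond this; as a sanity check, for linear codes one can sidestep entropy entirely, since there $H_{\CC}(A)$ equals the rank of the column submatrix $\{\mathbf{g}_{i} : i \in A\}$ of a generator matrix and (1)--(3) are then the standard rank axioms of the associated column matroid, verifiable by elementary linear algebra.
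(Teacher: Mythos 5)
Your proof is correct. The paper does not prove this proposition at all but simply cites Fujishige~\cite{fujishige78}; your argument — identifying $H_{\CC}(A)$ with the Shannon entropy of the uniform marginal $\bm{X}_A$ via quasi-uniformity, then invoking the chain rule and the fact that conditioning cannot increase entropy — is precisely the standard derivation underlying that citation, so you have supplied the proof the paper leaves implicit.
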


The entropy function also behaves nicely for restricted codes. Let $I \subseteq [n]$ and $\CC|_{I}$ be the restriction of $\CC$ to the set $I$. Then for $J \subseteq I$, we have $H_{\CC|_{I}}(J) = H_{\CC}(J)$. Finally, we define a closure operation on the subsets of $[n]$ for linear codes. 

\begin{definition}
Let $\CC$ be an $[n,k,d]$ linear code and $I \subseteq [n]$. The closure operator $\cl : 2^{[n]} \to 2^{[n]}$ is 
\[
\cl(I)=\{ e \in [n] : H(I \cup e) = H(I) \}.
\]
\end{definition}
One can think of the closure operator via the generator matrix $G_{\CC}$ of $\CC$, where $\cl(I)$ is the set of all columns in $G_{\CC}$ contained in the linear span of the columns indexed by $I$. 

The following table summarizes the notation used throughout the paper. The formal definitions for some of them will only appear later in the paper. 

\begin{center}
{\renewcommand{\arraystretch}{1.2}
\begin{tabular}{|c|c|}
\hline 
$[n,k,d]$ & \makecell{Linear code of length $n$, dimension $k$\\ and minimum distance $d$} \\
\hline 
$(n,k,d,r,\delta)$ & \makecell{LRC with locality $(r,\delta)$ and $\delta$ the local\\ minimum distance} \\
\hline 
$(n,k,d)(\kappa,\delta)$ & LRC with dimension-locality $(\kappa,\delta)$ \\
\hline 
$\GG(k,d)$ & \makecell{Griesmer bound on the length $n$ \\
of a linear code} \\
\hline 
$\kappa_{B}$ & Bound on the dimension of a code \\
\hline 
$\kappa_{A}$ & \makecell{Log-convex bound on the dimension\\ of a code}\\
\hline 
$B_{l-c}(n,d)$ & \makecell{Log-convex bound on the cardinality \\ of a code} \\
\hline 
$\RR$ & Rate $k/n$\\
\hline
$\delta_n$ & Relative minimum distance $d/n$ \\
\hline 
\textit{res}-chain & Chain of consecutive residual codes \\
\hline 
\end{tabular}}
\end{center}

\subsection{Definition of locality and relation with the number of nodes contacted for repairing}

In this part, we explain how the definition of locality $(r,\delta)$ in \cite{prakash12} diverges from the initial motivation of introducing the notion of locality, and state a new definition of locality called dimension-locality. As mentioned in the introduction, \cite{gopalan12} and then \cite{papailiopoulos12} introduced the notion of locality $r$ to reduce the repair traffic by designing storage codes such that one failure can be repaired by contacting only a small number of nodes $r \ll k$ in the storage system. A natural extension of the above definition is to allow multiple erasures to be corrected locally by still accessing  fewer  nodes than $k$. For this, we need the local restrictions to have a minimum distance of $\delta$ so that up to $\delta-1$ erasures can be repaired locally. The definition presented in \cite{prakash12} is the following. 

\begin{definition}
\label{def:loc_rdelta}
An $[n,k,d]$ linear code $\CC$ has \emph{all-symbol locality} $(r,\delta)$ if for all code symbol indices $i \in [n]$ there exists a set $R \subseteq [n]$, called a \emph{repair set}, such that 
\begin{enumerate}
\item $i \in R$,
\item $|R| \leq r + \delta -1$,
\item The minimum distance of the restriction of $\CC$ to the set $R$ is at least $\delta$.  
\end{enumerate}
We refer to $\CC$ as an $(n,k,d,r,\delta)$-LRC. 
\end{definition}

With a slight abuse of language, we say that a repair set $R$ has dimension $\kappa$ and minimum distance $\delta$ if the restricted code $\CC|_{R}$ has dimension $\kappa$ and minimum distance $\delta$. We also say that a repair set $R$ is MDS if $\CC|_{R}$ is an MDS code. 

In Definition \ref{def:loc_rdelta}, any $\delta-1$ coordinates of $R$ are determined by the values of the remaining $|R| - \delta + 1 \leq r $ coordinates, thus enabling local repair by contacting at most $r$ other nodes. The problem with Definition \ref{def:loc_rdelta} is that it implicitly requires the repair sets to be MDS in order for $r$ to be the dimension of the local codes. In other words,  if a repair set $R$ is not MDS, then the number of nodes needed to repair any $\delta-1$ failures in $R$ is strictly less than $r$. Thus, Definition \ref{def:loc_rdelta} diverges from the initial meaning of controlling precisely the number of nodes contacted during the repair process when considering non-MDS repair sets. 

This observation is particularly relevant when the field size is fixed and $\delta$ is too large for MDS codes to exist. For example, if we consider binary codes and require that $\delta>2$ in order to correct more than one erasure locally, then none of the repair sets can be MDS and $r$ is no longer the local dimension. We illustrate this phenomenon by a concrete example. 

\begin{example}
\label{ex:104443}
Let $\CC$ be the binary linear $[10,4,4]$-code given by the  generator matrix
\[ G=\SmallMatrix{
1&0&0&0&1&0&1&1&1&1\\
0&1&0&0&1&1&0&1&1&1\\
0&0&1&0&0&1&0&1&0&1\\
0&0&0&1&0&0&1&0&1&1\\
}.
\]

We define the three repair sets by their corresponding columns in $G$ : 
$
R_{1}=\{1,2,3,5,6,8\}, R_{2}=\{2,3,6,7,9,10\}$, and $R_{3}=\{1,4,6,7,8,10\}$. Every repair set has size $6$, minimum distance $d_{\CC|_{R_{i}}} = 3$, and entropy $H(R_{i})=3$.  Thus, according to Definition~\ref{def:loc_rdelta}, we get $r= |R_i|-\delta+1=4$ and hence $\CC$ is a binary linear $(10,4,4,4,3)$-LRC. However, even though $r=k=4$, we can repair up to two failures by contacting at most $3$ nodes, as shown by the entropy. While this is not really in contrast with the above definition, it only gives a loose bound $r=4$ for the number of nodes we need to contact.
\end{example}

To be able to precisely keep track of the number of nodes contacted during the repair process, we propose a slightly different definition for locally repairable codes tolerating multiple local erasures. We do this by replacing the condition on the size of a repair set by a condition on its dimension. 

\begin{definition}
\label{def:loc_kappadelta}
An $[n,k,d]$ linear code $\CC$ has \emph{all-symbol dimension-locality} $(\kappa,\delta)$ if for all code symbol indices  $i \in [n]$, there exists a set $R \subseteq [n]$ such that 
\begin{enumerate}
\item $i \in R$,
\item $ H(R)= \log_{q}(|\{ \mathbf{c}_{R} : \mathbf{c} \in \CC \}|) \leq \kappa$,
\item The minimum distance of the restriction of $\CC$ to the set $R$ is at least $\delta$.
\end{enumerate}
We refer to $\CC$ as an $(n,k,d)(\kappa,\delta)$-LRC. 
\end{definition}

With this definition, we regain the fact that every $\delta-1$ coordinates can be recovered by contacting at most $\kappa$ other coordinates and $\kappa$ can be made tight. When we do not restrict the field size, optimal repair sets are MDS with size equal to $\kappa + \delta -1$. Thus, both definitions coincide for large alphabets. When the field size is fixed and we have non-MDS repair sets, the parameter $\kappa$ in Definition \ref{def:loc_kappadelta} still measures the local dimension. On the contrary, the parameter $r$ in Definition \ref{def:loc_rdelta} measures the local dimension with the addition of a non-zero part of the local size that varies depending on the field size and the local minimum distance.
Moreover, the new definition allows more flexibility on the size of the repair sets. Specifically, the size of a repair set can be smaller or bigger than $\kappa + \delta -1 $, since $\kappa$ and $\delta$ are only an upper bound on the dimension of a code and a lower bound on its minimum distance, respectively.

Obviously, every code with locality $(r,\delta)$ is a code with dimension-locality $(\kappa = r, \delta)$. To obtain a closer relation between the two notions of locality, we can replace $r$ by the value of an upper bound on the dimension of a code given its length $r+\delta-1$ and minimum distance $\delta$. 
Let $\kappa_{B}$ be the maximal dimension obtained by such a bound. Then, every code with locality $(r,\delta)$ is a code with dimension-locality $(\kappa = \kappa_{B}, \delta)$. The problem is that $\kappa_{B}$ might not be tight, \ie , there is no repair set $R$ such that $H(R)=\kappa_{B}$, which goes against the purpose of the new definition. This is illustrated in the following example.

\begin{example}
\label{ex:199353}
Let $\CC_{1}$ be the $[7,3,4]$ binary linear code obtained by taking the dual of the $[7,4,3]$ Hamming code, and let $\CC_{2}$ be the $[6,3,3]$ code obtained by puncturing $\CC_{1}$ on the last coordinate. Consider the $[13,6,3]$ binary linear code $\CC=\CC_{1} \oplus \CC_{2}$ with generator matrix 
\[ G=\SmallMatrix{
G_{1} & \mathbf{0} \\
\mathbf{0} & G_{2} \\
},
\]
where $G_{1}$ and $G_{2}$ are the generator matrices of $\CC_{1}$ and $\CC_{2}$, respectively. 

By defining the repair sets to be $R_{1}=\{1, \ldots , 7\}$ and $R_{2}=\{8, \ldots , 13 \}$, the code $\CC$ is a $(13,6,3,5,3)$-LRC since $\delta=3$ and $|R_{1}|=7$. The maximal entropy of a repair set is $3$, so $\CC$ is a $(13,6,3)(3,3)$-LRC. To get an upper bound on the local dimension from the parameters $(r,\delta)$, we can use an upper bound on the dimension of a binary code of length $r+\delta-1=7$ and minimum distance $\delta=3$. Since the Hamming code is a $[7,4,3]$ binary code, the best upper bound gives a local dimension of $4$. However, this is strictly greater than the maximal dimension of a local code in this example, which is $3$. Thus, it is impossible to obtain the exact maximal dimension of a local code from the parameters $(r,\delta)$ in Definition \ref{def:loc_rdelta}. 
\end{example}

The previous example also illustrates how the flexibility of the sizes of the repair sets, obtained from Definition \ref{def:loc_kappadelta}, can be used to achieve a code of length $n$ that is not divisible by any of the sizes of the repair sets, while keeping the dimension-locality parameters $(\kappa, \delta)$.




\section{Bounds for dimension-locality $(\kappa,\delta)$ and locality $(r,\delta)$}
\label{section:CMG}

In this section, we study the structure of linear codes with dimension-locality $(\kappa, \delta)$ and derive a bound on their parameters. Following the general framework of \cite{cadambe15}, we construct a set with a large size and a small entropy. This is done by using a detailed analysis of the repair sets based on the work done in \cite{griesmer60} and \cite{solomon65}. It yields a bound of the form of the bound \eqref{eq:CM} handling both MDS and non-MDS repair sets. Then, we extend our bound to linear codes with locality $(r,\delta)$. Finally, using a weaker estimation of our results, we derive a new Singleton-type bound for $(n,k,d,r,\delta)$-LRCs. 

We start by presenting the new bound for linear codes with dimension-locality. Here, $k_{\opt}^{(q)}$ is an upper bound on the dimension of a code and $\GG(\kappa,\delta)$ is the Griesmer bound taken over $\F_{q}$. 

\begin{theorem}
\label{thm:CMG_bound}
Let $\CC$ be a linear $(n,k,d)(\kappa, \delta)$-LRC over $\F_{q}$. Then we have
\begin{equation}
\label{eq:CMG_kappa}
k \leq \min_{\lambda \in \Z_{+}} \left\{ \lambda + k_{\opt}^{(q)}(n-(a+1) \cdot \GG(\kappa, \delta) + \GG(\kappa - b, \delta), d ) \right\}
\end{equation}
where $a,b \in \Z$ are such that $\lambda = a \kappa + b, 0 \leq b < \kappa$.
\end{theorem}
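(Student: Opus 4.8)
The plan is to follow the Cadambe--Mazumdar strategy: build a coordinate set $S \subseteq [n]$ whose entropy $H(S)$ is small (at most $\lambda = a\kappa + b$) but whose size is large, and then apply a locality-unaware bound to the punctured code $\CC|_{[n]\setminus S}$. The dimension of this punctured code is at least $k - H(S) \geq k - \lambda$, it has length $n - |S|$, and its minimum distance is at least $d$ (puncturing does not decrease $d$ as long as we puncture a set with small entropy; more precisely we use that $\CC|_{[n]\setminus S}$ still has distance $\geq d$ because $S \subseteq \cl(S)$ and the surviving code has the same minimum distance properties). Hence $k - \lambda \leq k_{\opt}^{(q)}(n - |S|, d)$, and optimizing over $\lambda$ gives the bound once we show $|S| \geq (a+1)\cdot\GG(\kappa,\delta) - \GG(\kappa-b,\delta)$.

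The heart of the argument is therefore the greedy construction of $S$ together with the lower bound on $|S|$. I would build $S$ as a union of $a+1$ pieces coming from $a+1$ distinct repair sets, plus use the residual-code machinery inside each piece to guarantee largeness. Concretely: pick a coordinate $i_1$, take its repair set $R_1$; by Definition~\ref{def:loc_kappadelta} the restricted code $\CC|_{R_1}$ has dimension $\leq \kappa$ and minimum distance $\geq \delta$, so by the Griesmer bound $|R_1| \geq \GG(H(R_1),\delta) \geq$ (something controlled by $\GG(\kappa,\delta)$ — actually one wants a repair set of entropy exactly close to $\kappa$, but since $\GG(\cdot,\delta)$ is increasing one has to be careful which direction the inequality goes). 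The cleaner route is: each repair set $R$ with $H(R) = \kappa'$ and distance $\geq \delta$ satisfies $|R| \geq \GG(\kappa',\delta)$, and we add $\cl(R)$-type contributions. After absorbing $a$ full repair sets we have accumulated entropy $\approx a\kappa$ and size $\geq a\cdot\GG(\kappa,\delta)$; for the final $(a+1)$-st repair set $R_{a+1}$ we only want to add enough of it to push the entropy up by exactly $b$ more (so total entropy $\leq a\kappa + b = \lambda$), and the residual-code construction of \cite{solomon65} lets us remove a sub-repair-set of entropy $\kappa - b$ and length $\geq \GG(\kappa-b,\delta)$ from $R_{a+1}$, so the part we keep has size $\geq \GG(\kappa,\delta) - \GG(\kappa-b,\delta)$. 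Summing, $|S| \geq a\GG(\kappa,\delta) + \GG(\kappa,\delta) - \GG(\kappa-b,\delta) = (a+1)\GG(\kappa,\delta) - \GG(\kappa-b,\delta)$, as required.

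The main obstacle — the step I expect to need the most care — is making the greedy accumulation actually work while keeping the entropy bounded and the size large simultaneously, because successive repair sets can overlap each other and overlap the closure of what has already been chosen. The submodularity (polymatroid axiom 3) of $H$ is the tool here: when we adjoin a new repair set $R_{j}$ to the current set $S_{j-1}$, the entropy increase $H(S_{j-1} \cup R_j) - H(S_{j-1})$ is at most $H(R_j) - H(R_j \cap \cl(S_{j-1})) \leq \kappa$, but the size increase $|R_j \setminus S_{j-1}|$ must be shown to be at least $\GG(\kappa,\delta)$ minus the already-counted overlap — and one needs the overlap to have been "paid for" in a previous step. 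The way \cite{cadambe15} handles this is to always choose the next seed coordinate $i_j$ outside $\cl(S_{j-1})$ (possible as long as $H(S_{j-1}) < k$), which guarantees each new repair set contributes genuinely new entropy; iterating the residual-code peeling inside that repair set then lower-bounds its fresh contribution to $|S|$ by a Griesmer-type sum. I would need to check that iterating the residual construction $\kappa'$ times (once per unit of new entropy) reproduces exactly $\GG(\kappa',\delta) = \sum_{i=0}^{\kappa'-1}\lceil \delta/q^i\rceil$ as the length lower bound, and that the truncated last step removing entropy $\kappa - b$ leaves length at least $\GG(\kappa,\delta) - \GG(\kappa-b,\delta) = \sum_{i=\kappa-b}^{\kappa-1}\lceil \delta/q^i \rceil$; this is a matching telescoping of the Griesmer sum, so it should go through, but it is the delicate bookkeeping point. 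Finally, I would confirm the edge cases $b = 0$ (then $\GG(\kappa-b,\delta) = \GG(\kappa,\delta)$ and the $(a+1)$-st repair set contributes nothing, recovering the clean CM-type bound with $a$ repair sets) and $\lambda = 0$ (trivial), and that the minimization over all $\lambda \in \Z_+$ is legitimate because for each $\lambda$ we have exhibited a valid $S$.
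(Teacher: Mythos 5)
Your high-level plan is the same as the paper's (Cadambe--Mazumdar shortening combined with residual codes and the Griesmer bound, with the same telescoping $\GG(\kappa,\delta)-\GG(\kappa-b,\delta)=\sum_{i=\kappa-b}^{\kappa-1}\lceil\delta/q^i\rceil$), but two steps do not go through as written. First, a small one: you pass to the \emph{punctured} code $\CC|_{[n]\setminus S}$ and assert its minimum distance is at least $d$, but puncturing can reduce the minimum distance. What is actually needed is the \emph{shortened} code $\CC/S$, which has length $n-|S|$, dimension $k-H(S)$, and minimum distance $\geq d$; this is the paper's Lemma~\ref{lemma:contract}, taken from \cite{cadambe15}.

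The second gap is the one you flag yourself as the ``main obstacle,'' and your proposed fix for it does not work. Choosing each seed coordinate outside $\cl(S_{j-1})$ guarantees only that the next repair set contributes \emph{at least one} unit of fresh entropy, not the full $\kappa$ units and the $\GG(\kappa,\delta)$ of fresh size your accounting assumes per step. In \cite{cadambe15} the resulting size shortfall is compensated by padding the union with arbitrary filler coordinates, but precisely this trick fails here: when repair sets are non-MDS, arbitrary filler can push the entropy above $\lambda$, which is the whole point of the paper's discussion. The paper's actual mechanism (its Lemmas~\ref{lemma:correction} and~\ref{lemma:recursive_correction}) is different in a way your sketch does not capture: one keeps adjoining repair sets $R_1,R_2,\ldots$ (possibly far more than $a+1$ of them), charging each only its \emph{net} entropy contribution $\gamma_i-\gamma_{i-1}=H(R_i)-H(F_{i-1}\cap R_i)$, and lower-bounding the corresponding net size contribution $|R_i\setminus F_{i-1}|$ by $\GG(\gamma_i-\gamma_{i-1},\delta)$ via a Griesmer bound applied to the local code $\CC|_{R_i}$ shortened on $F_{i-1}\cap R_i$; one then sums these using the superadditivity $\GG(a',\delta)+\GG(b',\delta)\geq\GG(a'+b',\delta)$ so that each block of accumulated entropy $\kappa$ costs at least $\GG(\kappa,\delta)$ in length; and when the next repair set $R'$ would overshoot the remaining budget $\alpha$, one does not keep an arbitrary $\alpha$-entropy piece of $R'$ but a specific set $S$ from the \textit{res}-chain of $\cl(R')$ with $H(S)-H(S\cap F)=\alpha$, whose minimum distance is still $\geq\lceil\delta/q^{\kappa-\alpha}\rceil$, giving a net size contribution $\geq\GG(\alpha,\lceil\delta/q^{\kappa-\alpha}\rceil)$. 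This last bound is exactly the $b$-piece your telescoping needs, but obtaining it requires the explicit shortened-local-code Griesmer argument and the \textit{res}-chain, not a generic ``remove a sub-repair-set of entropy $\kappa-b$'' statement (which, as you half-observe, would give an inequality in the wrong direction since you only know $|R'|\geq\GG(\kappa,\delta)$, not an upper bound on what is removed). Whether you place the truncated $b$-piece first (as the paper does, as the starting set $F_s$) or last is immaterial once this accounting is in place.
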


\begin{proof}
The proof is given in the appendix. 
\end{proof}


In order to prove this bound, we need a better understanding of the bound \eqref{eq:CM} and the implications of having a non-MDS repair set. The bound \eqref{eq:CM} relies mainly on two results. The first result is a construction of a set with an upper bound on its entropy and a lower bound on its size. The second result is a shortening argument that governs the part inside $k_{\opt}^{(q)}$ in the bound \eqref{eq:CM}. This is reproduced here with a slight rephrasing.

\begin{lemma}[\cite{cadambe15}, Lemma 2]
\label{lemma:contract}
Let $\CC$ be an $[n,k,d]$ linear code over $\F_{q}$ and $I \subseteq [n]$ such that $H(I) < k $. Then the shortened code $\CC/I$ has parameters $[n-|I|, k-H(I), d'\geq d]$. 
\end{lemma}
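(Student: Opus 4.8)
The plan is to read off each of the three parameters of $\CC/I$ directly from the definition $\CC/I = \{\mathbf{c}_{[n]\setminus I} : \mathbf{c}\in\CC,\ c_i = 0 \text{ for all } i \in I\}$, handling length, dimension, and minimum distance in turn. The length is immediate: every element of $\CC/I$ is a vector indexed by $[n]\setminus I$, hence has length $n-|I|$, and nothing needs to be argued.

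For the dimension I would introduce the subcode $\CC_0 := \{\mathbf{c}\in\CC : c_i = 0 \text{ for all } i\in I\}$, which is the kernel of the $\F_q$-linear projection $\pi_I\colon \CC \to \F_q^{|I|}$, $\mathbf{c}\mapsto \mathbf{c}_I$. The image of $\pi_I$ is exactly the restricted code $\CC|_I$, whose dimension equals $H(I)$ by the interpretation of the entropy function as the rank of the submatrix of $G_{\CC}$ formed by the columns indexed by $I$. Rank–nullity then gives $\dim\CC_0 = k - H(I)$. Finally, the coordinate map $\CC_0 \to \CC/I$, $\mathbf{c}\mapsto \mathbf{c}_{[n]\setminus I}$, is surjective by definition and injective, since a codeword of $\CC_0$ that is zero on $[n]\setminus I$ is also zero on $I$ and hence zero; therefore $\dim(\CC/I) = k - H(I)$. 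The hypothesis $H(I) < k$ is used precisely at this point, to guarantee that $\CC/I$ is nonzero and so has a well-defined minimum distance.

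For the minimum distance, let $\mathbf{c}'\in\CC/I$ be nonzero and pick $\mathbf{c}\in\CC$ with $c_i = 0$ for all $i\in I$ and $\mathbf{c}_{[n]\setminus I} = \mathbf{c}'$. Since $\mathbf{c}'\neq\mathbf{0}$ we have $\mathbf{c}\neq\mathbf{0}$, so $\mathrm{wt}(\mathbf{c})\geq d$; and since $\mathbf{c}$ vanishes on $I$, its weight is unchanged upon deleting the coordinates in $I$, i.e. $\mathrm{wt}(\mathbf{c}) = \mathrm{wt}(\mathbf{c}_{[n]\setminus I}) = \mathrm{wt}(\mathbf{c}')$. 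Hence every nonzero codeword of $\CC/I$ has weight at least $d$, which gives $d'\geq d$.

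I do not expect any genuine obstacle: the argument is a direct rank–nullity computation together with the elementary fact that the weight of a codeword supported off $I$ is not affected by deleting the coordinates in $I$. The only subtlety worth flagging explicitly is the degenerate case $H(I) = k$, in which $\CC/I$ is the zero code; this is exactly what the assumption $H(I) < k$ rules out, and it should be pointed out rather than glossed over.
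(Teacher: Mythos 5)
Your proof is correct and complete: the rank--nullity computation identifying $\dim(\CC/I)$ with $k-H(I)$ via the projection onto the $I$-coordinates, together with the observation that deleting coordinates on which a codeword vanishes preserves its weight, is exactly the standard argument, and you correctly isolate the role of the hypothesis $H(I)<k$. Note that the paper itself gives no proof of this lemma --- it is quoted verbatim (with slight rephrasing) from the cited reference --- so there is no alternative argument in the paper to compare against.
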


Regarding the first result, the technique used to construct large sets relies on taking the union of repair sets. If two repair sets happen to intersect, which will reduce both the entropy and the size of their union, a correction is performed by adding arbitrary elements to their union. The main difficulty to extend this technique to non-MDS repair sets is to deal with the intersection of the repair sets and find the appropriate correction. Specifically, the intersection of two repair sets can now have a size strictly larger than its entropy (take, for example, $R_{1}$ and $R_{2}$ in Example \ref{ex:104443}). Thus, it is not possible anymore to correct their union by an arbitrary set since the resultant set might exceed the upper bound on the entropy. 

In order to correct the intersection, the main idea is to create a set using consecutive residual codes. As mentioned in the introduction, for any $[n,k,d]$ linear code $\CC$ over $\F_q$, there exists $\CC'$, a restriction of $\CC$ called the residual code of $\CC$, such that $\CC'$ has parameters $[n-d, k-1, d'\geq \left\lceil d/q \right\rceil]$. We define the sequence of consecutive residual codes as a chain of subsets of $[n]$.

\begin{definition}
Let $\CC$ be a $[n,k,d]$ linear code over $\F_{q}$. The \emph{res}-chain of $\CC$ is a sequence of sets $(S_{i})_{i=0}^{k}$ with $S_{i} \subseteq [n]$ constructed recursively by starting with $S_{0}=[n]$ and $S_{i}$ is such that $\CC|_{S_{i}}$ is a residual code of $\CC|_{S_{i-1}}$.
\end{definition}
 
This definition is well-defined since by the proof of \cite[Theorem 1']{solomon65}, the residual code $\CC'$ of $\CC$ is constructed by restricting $\CC$ to a well-chosen set of coordinates. Therefore, we can interpret the recursive residual code chain as a sequence of sets in $[n]$. Furthermore, as the dimension of the residual code is one less than the dimension of the original code, the chain has length $k+1$ and for all $0 \leq \alpha \leq k$, there is a set $S$ in the \textit{res}-chain of $\CC$ such that $H(S)=\alpha$. Finally, by a recursive argument, if $S$ is a set in the \textit{res}-chain of $\CC$, then the minimum distance $d_{S}$ of the restriction to $S$ is bounded from below by
\[
d_{S} \geq \left\lceil \frac{d}{q^{k-H(S)}} \right\rceil.
\]


We now present two lemmas that are used to prove Theorem \ref{thm:CMG_bound}. The first lemma states how to increase the size of a set when the addition of an entire repair set will exceed the desired entropy. 

\begin{lemma}
\label{lemma:correction}
Let $\CC$ be a linear $(n,k,d)(\kappa,\delta)$-LRC over $\F_{q}$. Let $F\subseteq [n]$ be such that $\cl(F)=F$ and $\alpha$ an integer with $1 \leq \alpha \leq \kappa$. If there exists a repair set $R$ such that $H(R) - H(F \cap R) \geq \alpha$, then, there exists $F' \subseteq [n]$ with $\cl(F')=F'$ such that
\begin{itemize}
\item $H(F') \leq H(F) + \alpha$,
\item $|F'| \geq |F| + \GG \left( \alpha, \left\lceil \frac{\delta}{q^{\kappa - \alpha}} \right\rceil \right)$.
\end{itemize}
\end{lemma}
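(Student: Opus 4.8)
The plan is to build the larger closed set $F'$ by walking along the \emph{res}-chain of the restricted code $\CC|_R$ and transplanting the relevant coordinates into $F$. The hypothesis $H(R) - H(F \cap R) \geq \alpha$ says that, inside $R$, there are at least $\alpha$ "fresh" dimensions not already accounted for by $F$, so intuitively we should be able to add a chunk of $R$ that contributes exactly $\alpha$ to the entropy while being as long as the Griesmer bound allows. First I would set $\kappa' := H(R) \leq \kappa$ and $d' := d_{\CC|_R} \geq \delta$, and consider the \emph{res}-chain $(S_i)_{i=0}^{\kappa'}$ of $\CC|_R$; recall that for each $S$ in this chain we have the minimum-distance estimate $d_S \geq \lceil d'/q^{\kappa' - H(S)}\rceil \geq \lceil \delta / q^{\kappa - H(S)}\rceil$, and that the chain realizes every entropy value $0,1,\ldots,\kappa'$.

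The key step is to choose the right set $S$ from the \emph{res}-chain of $\CC|_R$ so that the entropy added to $F$ is controlled. I would pick $S$ in the chain with $H(S) = \kappa' - \alpha$ (this is possible since $0 \leq \kappa' - \alpha \leq \kappa'$), and then set $F' := \cl(F \cup (R \setminus S))$. The reason for removing $S$ rather than adding $S$: the coordinates in $R \setminus S$ are precisely the ones "used up" in the first $\alpha$ steps of building the residual chain, and by construction of residual codes each such step, restricted to the complement of the previous set, removes $d_{S_{i-1}}$ coordinates while dropping the dimension by exactly $1$. Summing over the $\alpha$ steps gives $|R \setminus S| = \sum_{i=\kappa'-\alpha}^{\kappa'-1} d_{S_i} \geq \sum_{j=0}^{\alpha-1} \lceil \delta/q^{\kappa-1-j}\rceil = \GG(\alpha, \lceil \delta/q^{\kappa-\alpha}\rceil)$, after re-indexing $j = \kappa' - 1 - i$ and using $\kappa' \leq \kappa$ together with monotonicity of the ceiling-of-quotient in the exponent. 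That handles the size bound once we check that the coordinates of $R \setminus S$ are genuinely new, i.e.\ disjoint from $\cl(F)$ modulo what is already counted.

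For the entropy bound, the point is that $\CC|_S$ has dimension $\kappa' - \alpha$, so the coordinates of $R \setminus S$ span, inside $\CC|_R$, a space of dimension at most $\kappa' - (\kappa' - \alpha) = \alpha$ modulo the span of $S$; more carefully, $H(R) - H(S \cup (\text{anything}))$... the clean statement is $H\big((R \setminus S) \cup J\big) \leq H(J) + \alpha$ for any $J$, which follows from submodularity since $H(R) \le H(S) + |R \setminus S|$-type reasoning is too weak — instead use that in the matroid of $\CC|_R$ the set $S$ is a \emph{spanning} set of a flat of corank $\alpha$... Actually the honest route is: $H_{\CC|_R}(R) - H_{\CC|_R}(S) = \alpha$, hence by submodularity applied with $I = F \cap R$ and the chain sets, $H(F \cup (R \setminus S)) \leq H(F) + H_{\CC|_R}(R) - H_{\CC|_R}(S \cup (F \cap R)) \leq H(F) + \alpha$, where the last inequality uses $H_{\CC|_R}(S \cup (F\cap R)) \geq H_{\CC|_R}(S) = \kappa' - \alpha$ together with $H_{\CC|_R}(R) = \kappa'$. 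Taking closure does not change entropy, so $H(F') \leq H(F) + \alpha$, and $\cl(F') = F'$ by idempotence of $\cl$.

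The main obstacle I anticipate is making the size estimate fully rigorous: one must verify that $R \setminus S$ contributes \emph{new} coordinates to $F$ — that is, that $|\cl(F \cup (R\setminus S))| \geq |F| + |R \setminus S|$ is not what we want (that would be false in general), but rather that $|F'| \ge |F| + |R \setminus S|$ as \emph{sets of coordinates}, which holds simply because $F' \supseteq F \cup (R \setminus S)$ and $|(R\setminus S) \setminus F|$ could a priori be small. The resolution is that the Lemma's conclusion only claims $|F'| \geq |F| + \GG(\ldots)$, so it suffices that the coordinates of $R \setminus S$ lying outside $F$ number at least $\GG(\alpha, \lceil \delta/q^{\kappa-\alpha}\rceil)$; this in turn should follow from $H(R) - H(F \cap R) \geq \alpha$, because if too many coordinates of $R \setminus S$ were already in $F$ (equivalently in $\cl(F)$), the residual-chain steps would have been "visible" from $F$ and the fresh-dimension count $\alpha$ could not be achieved. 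Pinning down this last implication cleanly — presumably by choosing the \emph{res}-chain of $\CC|_R$ adapted to the flat $F \cap R$, so that the removed coordinates are automatically outside $\cl(F)$ — is the technical heart of the argument, and is exactly where the construction in \cite{solomon65} of the residual code as a restriction to a "well-chosen" set of coordinates gets used.
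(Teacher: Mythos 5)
Your approach diverges from the paper's at the crucial choice of what to adjoin to $F$, and unfortunately the choice you make breaks the entropy bound. You pick $S$ in the \emph{res}-chain of $\CC|_R$ with $H(S)=\kappa'-\alpha$ and set $F'=\cl\bigl(F\cup(R\setminus S)\bigr)$, i.e.\ you add the coordinates that were \emph{removed} during the first $\alpha$ residual steps. But those removed coordinates generally span the whole of $\CC|_R$, not just an $\alpha$-dimensional piece. Concretely, let $\CC|_R$ be a binary Simplex code $S(4,2)$ of length $15$, take $F=\emptyset$ and $\alpha=1$. One residual step removes the $8$ coordinates forming the support of a minimum-weight codeword; these are an affine hyperplane of $\F_2^4$ not through the origin and therefore span all of $\F_2^4$, so $H(R\setminus S)=4$. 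Then $H(F')=4$ while $H(F)+\alpha=1$. The same example shows your proposed ``clean statement'' $H\bigl((R\setminus S)\cup J\bigr)\le H(J)+\alpha$ and the ``honest route'' inequality $H\bigl(F\cup(R\setminus S)\bigr)\le H(F)+H_{\CC|_R}(R)-H_{\CC|_R}\bigl(S\cup(F\cap R)\bigr)$ are both false (take $J=F=\emptyset$; the left side is $4$, the right side is $\alpha=1$). Neither is a consequence of submodularity: the natural submodular pairing with $A=F\cup(R\setminus S)$ and $B=(F\cap R)\cup S$ gives $A\cup B=F\cup R$, $A\cap B=F\cap R$, hence an inequality in the \emph{opposite} direction.

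The paper's proof instead chooses $S$ in the \emph{res}-chain of $\cl(R)$ such that $H(S)-H(F\cap S)=\alpha$ --- i.e.\ $S$ contributes \emph{exactly} $\alpha$ fresh entropy relative to $F$ --- and then sets $F'=\cl(F\cup S)$, adding the residual set itself rather than its complement. Such an $S$ exists by a discrete intermediate-value argument: the quantity $H(S_i)-H(F\cap S_i)$ changes by at most $1$ along consecutive \emph{res}-chain sets and ends at $0$. With this choice the entropy bound is an immediate submodularity consequence, $H(F')\le H(F)+H(S)-H(F\cap S)=H(F)+\alpha$. The size bound then comes from applying the Griesmer bound to the shortened code $\CC|_S/(F\cap S)$, which has length $|S|-|F\cap S|$, dimension $H(S)-H(F\cap S)=\alpha$ and minimum distance at least $d_S\ge\lceil\delta/q^{H(R)-H(S)}\rceil\ge\lceil\delta/q^{\kappa-\alpha}\rceil$ (the last step using $H(S)\ge\alpha$, which holds because $H(F\cap S)\ge 0$). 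This also disposes cleanly of the ``are the new coordinates really new?'' worry you flag at the end: the quantity being bounded from below is directly $|S|-|F\cap S|=|S\setminus F|$, exactly the number of genuinely new coordinates contributed. In short, keep the residual code, not its complement, and choose the step in the chain by the condition $H(S)-H(F\cap S)=\alpha$ rather than by $H(S)=\kappa'-\alpha$.
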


\begin{proof}
The proof is given in the appendix.
\end{proof}

Using the above lemma, we can prove the following second lemma that represents the challenging part of proving the new bound. 

\begin{lemma}
\label{lemma:recursive_correction}
Let $\CC$ be a linear $(n,k,d)(\kappa, \delta)$-LRC over $\F_{q}$. Let $F \subseteq [n]$ be such that $\cl(F)=F$ and $H(F) +\kappa \leq k$. Then, there exists $F_{c} \subseteq [n]$ with $\cl(F_{c})=F_{c}$ such that 
\begin{itemize}
\item $H(F_{c}) \leq H(F) + \kappa$,
\item $|F_{c}| \geq |F| + \GG(\kappa, \delta)$.
\end{itemize}
\end{lemma}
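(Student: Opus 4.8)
The goal is to prove Lemma \ref{lemma:recursive_correction}: starting from a closed set $F$ with $H(F)+\kappa\le k$, we want to enlarge $F$ to a closed set $F_c$ whose entropy grows by at most $\kappa$ but whose size grows by at least $\GG(\kappa,\delta)$. The plan is to proceed by induction (or a finite iteration) on the entropy increment, peeling off one repair set at a time and using Lemma \ref{lemma:correction} as the workhorse for each step.

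\medskip

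\textbf{Setup and main idea.} First I would pick a code symbol $i \notin \cl(F)$ — this exists because $H(F) + \kappa \le k$ forces $H(F) < k$, so $\cl(F) \ne [n]$. By the dimension-locality assumption there is a repair set $R$ with $i \in R$, $H(R) \le \kappa$, and minimum distance of $\CC|_R$ at least $\delta$. Since $i \in R \setminus \cl(F)$, we have $H(F \cup R) > H(F)$, equivalently $H(R) - H(F \cap R) \ge 1$. Write $\alpha_1 = H(R) - H(F \cap R) \ge 1$; note $\alpha_1 \le H(R) \le \kappa$. The plan is to apply Lemma \ref{lemma:correction} with this $F$, this $R$, and $\alpha = \alpha_1$ to obtain a closed set $F^{(1)}$ with $H(F^{(1)}) \le H(F) + \alpha_1$ and $|F^{(1)}| \ge |F| + \GG(\alpha_1, \lceil \delta / q^{\kappa - \alpha_1}\rceil)$. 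If $\alpha_1 = \kappa$ we are essentially done after one step; otherwise we iterate, each time finding a symbol outside the current closure, invoking a repair set, and applying Lemma \ref{lemma:correction} with the corresponding entropy increment $\alpha_j$. We stop once the accumulated entropy increment reaches exactly $\kappa$ (the assumption $H(F) + \kappa \le k$ guarantees we never run out of room, so at each stage there is still a symbol outside the closure to grab). Let $\alpha_1 + \alpha_2 + \cdots + \alpha_m = \kappa$ be the sequence of increments, with each $\alpha_j \ge 1$.

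\medskip

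\textbf{Bounding the size increment.} After the iteration we have a closed set $F_c$ with $H(F_c) \le H(F) + \kappa$. For the size, each step $j$ contributes at least $\GG(\alpha_j, \lceil \delta/q^{\kappa - \sigma_j}\rceil)$ where $\sigma_j = \alpha_1 + \cdots + \alpha_j$ is the cumulative increment through step $j$ — note that in step $j$ the ambient "local dimension budget" remaining is $\kappa - \sigma_{j-1}$, so the relevant exponent of $q$ dividing $\delta$ in Lemma \ref{lemma:correction} is $\kappa - \alpha_j$ relative to a local code of dimension $\le \kappa - \sigma_{j-1}$; I need to track this carefully so the floor/ceiling bookkeeping in the residual-distance estimate $d_S \ge \lceil d/q^{k - H(S)}\rceil$ lines up. The key combinatorial claim is then that
\[
\sum_{j=1}^{m} \GG\!\left(\alpha_j,\ \left\lceil \tfrac{\delta}{q^{\kappa - \sigma_j}} \right\rceil\right) \ \ge\ \GG(\kappa, \delta),
\]
with $\sum_j \alpha_j = \kappa$. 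Expanding $\GG(\alpha_j, d_j) = \sum_{t=0}^{\alpha_j - 1} \lceil d_j / q^t \rceil$ and using $d_j = \lceil \delta / q^{\kappa - \sigma_j}\rceil$, the left side becomes a sum of terms $\lceil \lceil \delta/q^{\kappa-\sigma_j}\rceil / q^t \rceil$ for $0 \le t \le \alpha_j - 1$. Using the identity $\lceil \lceil x/a\rceil / b\rceil = \lceil x/(ab)\rceil$ for positive integers $a,b$, each such term equals $\lceil \delta / q^{\kappa - \sigma_j + t}\rceil$, and as $j$ ranges over $1,\dots,m$ and $t$ over $0,\dots,\alpha_j-1$, the exponent $\kappa - \sigma_j + t$ ranges (I claim, after checking the arithmetic $\sigma_j - \alpha_j = \sigma_{j-1}$) exactly over $0, 1, \dots, \kappa-1$, each value once. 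Hence the sum telescopes exactly to $\sum_{e=0}^{\kappa-1} \lceil \delta / q^e\rceil = \GG(\kappa,\delta)$, giving the desired bound (in fact with equality in the size bound's worst case).

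\medskip

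\textbf{Expected obstacle.} The main obstacle is the careful verification that the exponents $\kappa - \sigma_j + t$ tile the set $\{0,1,\dots,\kappa-1\}$ without gaps or overlaps, together with getting the off-by-one in the residual-distance exponent exactly right when Lemma \ref{lemma:correction} is applied to the possibly-smaller current closed set rather than to a clean local code of dimension exactly $\kappa$. I would need to state precisely, at step $j$, what "$\kappa$" and "$\alpha$" are being fed to Lemma \ref{lemma:correction} — presumably one always works with the global parameter $\kappa$ from the dimension-locality and the increment $\alpha_j$, so that Lemma \ref{lemma:correction} literally yields $\GG(\alpha_j, \lceil \delta/q^{\kappa-\alpha_j}\rceil)$, and the improvement over naively using $\lceil \delta/q^{\kappa-\alpha_j}\rceil$ every time comes from the fact that the $\alpha_j$ partition $\kappa$. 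A secondary subtlety is ensuring closedness is preserved at each step (Lemma \ref{lemma:correction} already delivers $\cl(F^{(j)}) = F^{(j)}$, so this is automatic) and that the repair set found at step $j$ genuinely contributes a strictly positive increment $\alpha_j \ge 1$ — which holds because we always choose the witnessing symbol $i$ outside the current closure, and $H(R) - H(F^{(j-1)} \cap R) \ge 1$ follows exactly as in the first step. Once the tiling identity is nailed down, the rest is bookkeeping.
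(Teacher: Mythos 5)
Your iterative structure is sound, the tiling identity you state is genuinely correct (the exponents $\kappa-\sigma_j+t$ for $0\le t\le\alpha_j-1$ do tile $\{0,\dots,\kappa-1\}$ exactly once), and the closure bookkeeping is fine. But the gap you flag in the final paragraph is real and is not mere bookkeeping: Lemma~\ref{lemma:correction}, applied at step $j$ with increment $\alpha_j$ and the global parameter $\kappa$, delivers the term $\GG\!\left(\alpha_j,\lceil\delta/q^{\kappa-\alpha_j}\rceil\right)$, not $\GG\!\left(\alpha_j,\lceil\delta/q^{\kappa-\sigma_j}\rceil\right)$. Since $\sigma_j\ge\alpha_j$ (strictly so for $j\ge 2$), the exponent $\kappa-\sigma_j$ you need is smaller, hence the ceiling is larger, hence you are asking Lemma~\ref{lemma:correction} for strictly more than it gives. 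The weaker terms do not sum to $\GG(\kappa,\delta)$ in general: with $q=2$, $\kappa=2$, $\delta=3$, $\alpha_1=\alpha_2=1$ you get $\GG(1,2)+\GG(1,2)=4<5=\GG(2,3)$. Nor can you easily strengthen Lemma~\ref{lemma:correction}'s proof to reach exponent $\kappa-\sigma_j$: the set $S$ in the \emph{res}-chain of $\cl(R)$ only guarantees $H(S)\ge\alpha_j$, giving $H(R)-H(S)\le\kappa-\alpha_j$; getting down to $\kappa-\sigma_j$ would require $H(S\cap F^{(j-1)})\ge\sigma_{j-1}$, and there is no reason the residual sets of a fresh repair set should overlap the accumulated $F^{(j-1)}$ that heavily.

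The paper's proof avoids this by \emph{not} invoking Lemma~\ref{lemma:correction} on the intermediate steps. While the remaining budget $\kappa-\gamma_{i-1}$ strictly exceeds every repair set's marginal entropy $H(R)-H(F_{i-1}\cap R)$, it adds the entire repair set $R_i$ and bounds $|R_i|-|F_{i-1}\cap R_i|$ by applying the Griesmer bound to the shortening of $\CC|_{R_i}$ on $F_{i-1}\cap R_i$, which has dimension $\gamma_i-\gamma_{i-1}$ and minimum distance still at least $\delta$ (shortening cannot decrease it). This gives $\GG(\gamma_i-\gamma_{i-1},\delta)$ with the \emph{full} $\delta$, avoiding the residual-distance penalty altogether, and summing these uses only the superadditivity $\GG(a,\delta)+\GG(b,\delta)\ge\GG(a+b,\delta)$. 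Lemma~\ref{lemma:correction} is then applied exactly once, at the terminal step, with $\alpha=\kappa-\gamma_j$, and the two-term Griesmer chain $\GG(\gamma_j,\delta)+\GG\!\left(\kappa-\gamma_j,\lceil\delta/q^{\gamma_j}\rceil\right)=\GG(\kappa,\delta)$ closes the argument; that identity is the two-block special case of the tiling you wanted. To repair your proof, replace the repeated applications of Lemma~\ref{lemma:correction} in the intermediate steps by this direct shortening-plus-Griesmer estimate, and save Lemma~\ref{lemma:correction} for the single step where a repair set would overshoot the budget.
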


\begin{proof}
The proof is given in the appendix.
\end{proof}

The intuition behind the proofs of Lemma \ref{lemma:correction} and  \ref{lemma:recursive_correction} is the following. If all the repair sets are disjoint and have dimension $\kappa$, then Lemma \ref{lemma:recursive_correction} follows directly since no correction is needed. If two repair sets intersect each other or have a dimension less than $\kappa$, both the entropy and the size of their union will be smaller than expected and a correction is required. In this case, we can use the chain of residual codes to get a set that, when added to their union, increases the entropy by exactly the amount left. The last trick is to evaluate both the size of the union and the set in the \textit{res}-chain using the Griesmer bound. First, it is a bound on the length of a code where the minimum distance plays a more important role compared to the dimension. This fits the lower bound on the local minimum distance for codes with dimension-locality. Secondly, the Griesmer bound has the nice property that $\GG(\kappa, \delta) = \GG(\alpha, \delta) + \GG \left(\kappa - \alpha, \left\lceil\frac{\delta}{q^{\alpha}} \right\rceil \right) $. The first term in the sum can be used to get a lower bound on the size of the repair set minus its intersection with a given set. The second term, under some conditions, gives a lower bound on the size of a particular set in the \textit{res}-chain of a repair set. Thus, this relation is really useful when we add the extra set to correct the union of a repair set to a given set. Finally, the Griesmer bound is also consistent with our construction based on residual codes. 

As a corollary of Theorem \ref{thm:CMG_bound}, we can force the parameter $\lambda$ to only be a multiple of $\kappa$. This gives a bound resembling the original bound in \cite{cadambe15}. 

\begin{corollary}
\label{cor:CMG_tkappa}
Let $\CC$ be a linear $(n,k,d)(\kappa, \delta)$-LRC over $\F_{q}$. Then we have
\begin{equation}
\label{eq:CMG_tkappa}
k \leq \min\limits_{t \in \Z_{+}} \left\{ t\kappa + k_{\opt}^{(q)}(n-t\cdot \GG(\kappa, \delta), d) \right\}.
\end{equation}
\end{corollary}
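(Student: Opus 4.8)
The plan is to derive Corollary \ref{cor:CMG_tkappa} directly from Theorem \ref{thm:CMG_bound} by restricting the minimization in \eqref{eq:CMG_kappa} to the sub-family of indices $\lambda$ that are exact multiples of $\kappa$. Concretely, I would write $\lambda = t\kappa$ for $t \in \Z_{+}$, which corresponds in the notation of Theorem \ref{thm:CMG_bound} to the choice $a = t$ and $b = 0$. With $b = 0$ we have $\GG(\kappa - b, \delta) = \GG(\kappa, \delta)$, so the argument of $k_{\opt}^{(q)}$ in \eqref{eq:CMG_kappa} becomes
\[
n - (a+1)\cdot \GG(\kappa,\delta) + \GG(\kappa - b, \delta) = n - (t+1)\GG(\kappa,\delta) + \GG(\kappa,\delta) = n - t\cdot\GG(\kappa,\delta),
\]
and the term $\lambda$ out front is just $t\kappa$. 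Hence for every $t \in \Z_{+}$ the value $t\kappa + k_{\opt}^{(q)}(n - t\cdot\GG(\kappa,\delta), d)$ is one of the terms appearing in the minimization of \eqref{eq:CMG_kappa}.

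Since the minimum over the full index set $\Z_{+}$ is at most the minimum over the sub-collection of indices of the form $\lambda = t\kappa$, Theorem \ref{thm:CMG_bound} immediately gives
\[
k \;\leq\; \min_{\lambda \in \Z_{+}} \left\{ \lambda + k_{\opt}^{(q)}\!\left(n-(a+1)\GG(\kappa,\delta) + \GG(\kappa-b,\delta),\, d\right)\right\} \;\leq\; \min_{t \in \Z_{+}} \left\{ t\kappa + k_{\opt}^{(q)}(n - t\cdot\GG(\kappa,\delta),\, d)\right\},
\]
which is exactly \eqref{eq:CMG_tkappa}. One minor bookkeeping point to check is that the case $t = 0$ (equivalently $\lambda = 0$) is legitimately included: it yields the trivial bound $k \leq k_{\opt}^{(q)}(n,d)$, which is consistent since $\GG(\kappa,\delta)$ and the shortening arguments are only invoked for positive $\lambda$ in the proof of the theorem, and $\lambda = 0$ is always a valid element of $\Z_{+}$.

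There is essentially no substantive obstacle here — the corollary is a pure specialization, and all the work is carried by Theorem \ref{thm:CMG_bound}. The only thing requiring a line of care is confirming the arithmetic identity $-(t+1)\GG(\kappa,\delta) + \GG(\kappa,\delta) = -t\,\GG(\kappa,\delta)$ and that the decomposition $\lambda = a\kappa + b$ with $0 \le b < \kappa$ indeed forces $a = t,\, b = 0$ when $\lambda = t\kappa$, so that no other term is accidentally being dropped or double-counted. I would state these two observations explicitly and conclude.
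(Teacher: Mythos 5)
Your proof is correct and matches the paper's (unstated but implicit) argument exactly: the corollary is obtained from Theorem~\ref{thm:CMG_bound} by restricting the minimization to $\lambda = t\kappa$, which forces $a=t$, $b=0$, collapsing the shortening length to $n - t\cdot\GG(\kappa,\delta)$. The paper gives no separate proof precisely because this specialization is immediate, as you observe.
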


Even if the wider range of the parameter $\lambda$ makes the bound \eqref{eq:CMG_kappa} theoretically better than the bound \eqref{eq:CMG_tkappa}, the two bounds show similar experimental results. More precisely, we randomly generated some feasible parameters $(n,k,d,r,\delta)$ for LRCs over the binary field, which yielded results showing that the bound \eqref{eq:CMG_tkappa} is equal to the bound \eqref{eq:CMG_kappa}. One possible justification is that for two consecutive dimensions $\lambda-1$ and $\lambda$ inside the minimum in \eqref{eq:CMG_kappa}, the length in the second term decreases by the largest value when $\lambda = t \kappa$. Therefore, the optimal condition on the global dimension $k$ would always happen when $\lambda$ is a multiple of the local dimension $\kappa$. However, a formal proof is impossible due to the unknown intrinsic bound $k_{\opt}^{(q)}$.

\subsection{New bounds for locality $(r,\delta)$}

As already explained in Section \ref{section:definitions}, to obtain a bound on LRCs with locality $(r,\delta)$ instead of dimension-locality $(\kappa, \delta)$, we can estimate $\kappa$ by an upper bound on the maximal dimension of a code given its length $r + \delta -1$ and its minimum distance $\delta$. Let us call $\kappa_{B}$ the value of an arbitrary upper bound on the maximal dimension of a repair set. Since we never used that $\kappa$ is actually tight, our previous results apply directly to codes with locality $(r,\delta)$ via the estimated dimension $\kappa_{B}$. Therefore, we get the following new bound.

\newcommand{\seq}{{\leq}\!}
\begin{theorem}
\label{thm:CMG_rdelta}
Let $\CC$ be a linear $(n,k,d,r,\delta)$-LRC over $\F_{q}$ and $\kappa_{B}$ the upper bound on the local dimension. Then
\begin{equation}
\label{eq:CMG_r}
k \seq \min_{\lambda \in \Z_{+}} \left\{ \hspace{-1pt}\lambda + k_{\opt}^{(q)}(n-(a+1) \GG(\kappa_{B}, \delta) + \GG(\kappa_{B} - b, \delta), d ) \hspace{-2.5pt} \right\},
\end{equation}
where $a,b \in \Z$ are such that $\lambda = a \kappa_{B} + b, 0 \leq b < \kappa_{B}$.
\end{theorem}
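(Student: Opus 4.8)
The plan is to derive Theorem~\ref{thm:CMG_rdelta} as an immediate consequence of Theorem~\ref{thm:CMG_bound}, by observing that the proof of Theorem~\ref{thm:CMG_bound} never exploits the tightness of the dimension parameter $\kappa$ but only the two defining properties of a repair set that are actually used in Lemmas~\ref{lemma:correction} and~\ref{lemma:recursive_correction}: namely that every coordinate lies in some repair set $R$ with $H(R) \le \kappa$ and with $d_{\CC|_R} \ge \delta$. So the first step is to show that a linear $(n,k,d,r,\delta)$-LRC, together with the chosen upper bound $\kappa_B$ on the maximal dimension of a code of length $r+\delta-1$ and minimum distance $\delta$, is in particular a linear $(n,k,d)(\kappa_B,\delta)$-LRC in the sense of Definition~\ref{def:loc_kappadelta}.

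Concretely, I would argue as follows. Let $i \in [n]$ be any coordinate. By Definition~\ref{def:loc_rdelta} there is a repair set $R$ with $i \in R$, $|R| \le r+\delta-1$, and $d_{\CC|_R} \ge \delta$. The restricted code $\CC|_R$ is then a linear code of length $|R| \le r+\delta-1$ and minimum distance at least $\delta$; by monotonicity of the maximal dimension in the length (or simply by padding $R$ to a set of size exactly $r+\delta-1$, which can only increase the entropy), its dimension $H(R)$ is at most the maximal dimension of a code of length $r+\delta-1$ and minimum distance $\delta$, hence $H(R) \le \kappa_B$. Thus $R$ witnesses conditions (1)--(3) of Definition~\ref{def:loc_kappadelta} with parameter $\kappa_B$, and since $i$ was arbitrary, $\CC$ has all-symbol dimension-locality $(\kappa_B,\delta)$.

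Applying Theorem~\ref{thm:CMG_bound} to $\CC$ viewed as an $(n,k,d)(\kappa_B,\delta)$-LRC over $\F_q$, with $\kappa$ replaced by $\kappa_B$ throughout, yields exactly
\[
k \le \min_{\lambda \in \Z_+} \left\{ \lambda + k_{\opt}^{(q)}\bigl(n-(a+1)\,\GG(\kappa_B,\delta) + \GG(\kappa_B - b,\delta),\, d\bigr) \right\},
\]
where $a,b \in \Z$ satisfy $\lambda = a\kappa_B + b$ and $0 \le b < \kappa_B$, which is precisely \eqref{eq:CMG_r}. This completes the proof.

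I do not anticipate any real obstacle: the only subtlety worth spelling out is the claim that $H(R) \le \kappa_B$, i.e.\ that the bound defining $\kappa_B$ (stated for length exactly $r+\delta-1$) also dominates the dimension of shorter codes of the same minimum distance. If the chosen ambient bound is not a priori monotone in the length, one should handle this by enlarging $R$: adding $r+\delta-1-|R|$ arbitrary coordinates of $[n]$ to $R$ produces a set $R' \supseteq R$ of size exactly $r+\delta-1$ with $H(R') \ge H(R)$ and $d_{\CC|_{R'}} \le d_{\CC|_R}$ — but we need a \emph{lower} bound on the minimum distance, so instead one restricts to a minimum-distance-preserving subcode, or simply invokes that $k^{(q)}_{\opt}(\cdot,\delta)$ is nondecreasing in its first argument, which is immediate since lengthening a code by appending zero coordinates preserves dimension and does not decrease minimum distance. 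Either way the inequality $H(R) \le \kappa_B$ holds, and the rest is a verbatim transcription of Theorem~\ref{thm:CMG_bound}.
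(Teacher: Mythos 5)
Your proposal is correct and follows the same route as the paper: the paper observes (in Section~\ref{section:definitions} and again just before the theorem statement) that every $(n,k,d,r,\delta)$-LRC is automatically an $(n,k,d)(\kappa_B,\delta)$-LRC, and then applies Theorem~\ref{thm:CMG_bound} verbatim with $\kappa$ replaced by $\kappa_B$. You simply spell out in more detail the minor point that $H(R)\le\kappa_B$ even when $|R|<r+\delta-1$ (via zero-padding / monotonicity of $k_{\opt}^{(q)}(\cdot,\delta)$ in the length), which the paper leaves implicit.
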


It is crucial to estimate the length in the shortened part of the bound via the Griesmer bound instead of replacing it by $r+\delta-1$. The reason is that $r+\delta-1$ is an upper bound on the size of a repair set while we need something of the form of a lower bound. However, what we need is not exactly a lower bound since the dimension of a repair set can be lower than $\kappa_{B}$. We present a counter-example. 

\begin{example}
\label{ex:103323}
Let $\CC$ be the binary linear code given by the following generator matrix
\[ G=\SmallMatrix{
1&1&1&1&0&0&0&0&0&0\\
0&0&0&0&1&1&1&0&0&0\\
0&0&0&0&0&0&0&1&1&1\\
}.
\]
$\CC$ is a $(10,3,3,2,3)$-LRC with obvious repair sets. Estimating $\kappa_{B}$ using the Griesmer bound yields $\kappa_{B}=1$. However, there is no sets with an entropy less than $2$ and a size greater than $2\cdot (r+\delta-1)= 8$ since every set of size $8$ has already an entropy equal to $3$. The problem here is that the repair set of size $4$, which gives the upper bound $r+\delta-1$, has a minimum distance strictly greater than $\delta=3$. 
\end{example}

Using Lemma \ref{lemma:recursive_correction}, we can derive a Singleton-type bound that take into consideration non-MDS repair sets. 

\begin{theorem}
\label{thm:Singleton_G}
Let $\CC$ be a linear $(n,k,d,r,\delta)$-LRC over $\F_{q}$ and $\kappa_{B}$ the upper bound on the local dimension. Then 
\begin{equation}
\label{eq:singl_G}
d \leq n - \left\lceil \frac{k}{\kappa_{B}} \right\rceil \GG(\kappa_{B}, \delta) + \GG(\kappa_{B} - b, \delta)
\end{equation}
where $b=k-1 - \left( \left\lceil \frac{k}{\kappa_{B}} \right\rceil -1 \right) \kappa_{B}$. 
\end{theorem}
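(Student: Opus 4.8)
The plan is to derive \eqref{eq:singl_G} from Lemma \ref{lemma:recursive_correction} by iterating it as many times as possible and then invoking the ordinary Singleton bound on what remains. Concretely, I would start from $F_0 = \emptyset$ (which is closed, $\cl(\emptyset)=\emptyset$, with $H(F_0)=0$ and $|F_0|=0$), and repeatedly apply Lemma \ref{lemma:recursive_correction} with $\kappa_B$ in place of $\kappa$; note that, exactly as in the passage preceding Theorem \ref{thm:CMG_rdelta}, the proof of Lemma \ref{lemma:recursive_correction} never used tightness of $\kappa$, so it applies to LRCs with locality $(r,\delta)$ through the estimate $\kappa_B$. Each application, as long as $H(F_j) + \kappa_B \le k$, produces a closed set $F_{j+1}$ with $H(F_{j+1}) \le H(F_j) + \kappa_B$ and $|F_{j+1}| \ge |F_j| + \GG(\kappa_B,\delta)$. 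Let $s = \left\lceil k/\kappa_B \right\rceil - 1$ be the number of full iterations we can guarantee: after $s$ steps we have a closed set $F_s$ with $H(F_s) \le s\kappa_B$ and $|F_s| \ge s\,\GG(\kappa_B,\delta)$, and by the definition of $b$ in the statement one checks $H(F_s) \le s\kappa_B = k-1-b$, so there is still "room" of dimension $b+1 \le \kappa_B$ before reaching $k$.

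For the final, partial step I would not use Lemma \ref{lemma:recursive_correction} (which needs a full $\kappa_B$ of headroom) but rather Lemma \ref{lemma:correction}: since $H(F_s) < k$, there is some code symbol $i \notin F_s$, hence some repair set $R$ with $i \in R$ and $H(R) - H(F_s \cap R) \ge 1$; in fact, by choosing $i$ appropriately and using that $H(F_s) \le k-1-b$ one secures a repair set with $H(R) - H(F_s \cap R) \ge b+1$. Applying Lemma \ref{lemma:correction} with $\alpha = b+1$ yields a closed set $F_c$ with $H(F_c) \le H(F_s) + (b+1) \le k$ and $|F_c| \ge |F_s| + \GG\!\left(b+1, \left\lceil \delta/q^{\kappa_B - b - 1}\right\rceil\right)$. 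The key arithmetic identity to exploit here is the Griesmer decomposition $\GG(\kappa_B,\delta) = \GG(\kappa_B - b - 1, \delta) + \GG\!\left(b+1, \left\lceil \delta / q^{\kappa_B - b - 1}\right\rceil\right)$, which rearranges to $\GG\!\left(b+1, \left\lceil \delta/q^{\kappa_B-b-1}\right\rceil\right) = \GG(\kappa_B,\delta) - \GG(\kappa_B - b - 1, \delta)$; since $\kappa_B - b - 1 = \kappa_B - (k - s\kappa_B) = (s+1)\kappa_B - k$, and noting $s+1 = \left\lceil k/\kappa_B\right\rceil$ one can also rewrite $\GG(\kappa_B - b-1,\delta)$ as a "$\GG(\kappa_B - b', \delta)$"-type term matching the statement (after relabeling $b' = k - 1 - s\kappa_B = b$, and checking the index bookkeeping carefully). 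Combining, $|F_c| \ge s\,\GG(\kappa_B,\delta) + \GG(\kappa_B,\delta) - \GG(\kappa_B - b, \delta) = \left\lceil k/\kappa_B\right\rceil \GG(\kappa_B,\delta) - \GG(\kappa_B - b,\delta)$.

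To finish, I would apply Lemma \ref{lemma:contract}: since $H(F_c) \le k$ we may shorten $\CC$ at $F_c$ (if $H(F_c) = k$ the code has dimension $0$ and $d$ can be bounded trivially; otherwise $H(F_c) = k-1$, and one checks $H(F_c)=k-1$ holds because $H(F_s)+b+1$ can be taken to equal $k-1$ by the choice of $b$), obtaining a linear code of length $n - |F_c|$, dimension $k - H(F_c) = 1$, and minimum distance $d' \ge d$. A dimension-$1$ code of length $\ell$ has minimum distance at most $\ell$, hence $d \le d' \le n - |F_c| \le n - \left\lceil k/\kappa_B\right\rceil \GG(\kappa_B,\delta) + \GG(\kappa_B - b, \delta)$, which is \eqref{eq:singl_G}. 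The main obstacle I anticipate is the index bookkeeping around the last partial step: making sure that the available headroom is exactly $b+1$, that $\cl(F_s)$ genuinely contains a repair-set-coordinate $i$ with the required entropy gain $H(R) - H(F_s \cap R) \ge b+1$ (this needs a short argument that not all of $[n]$ is already in $\cl(F_s)$ and that some repair set contributes enough new dimension — essentially the same mechanism used inside the proof of Lemma \ref{lemma:recursive_correction}), and reconciling the two natural forms of the leftover Griesmer term $\GG(\kappa_B - b - 1, \delta)$ versus $\GG(\kappa_B - b, \delta)$ so that the exponent shifts line up with the definition $b = k - 1 - (\left\lceil k/\kappa_B\right\rceil - 1)\kappa_B$ given in the statement.
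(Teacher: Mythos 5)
Your overall plan---build a closed set of entropy $k-1$ and as large as possible, then conclude $d \le n - |I|$---is exactly the paper's, and your iteration step (applying Lemma~\ref{lemma:recursive_correction} $s = \lceil k/\kappa_B\rceil - 1$ times from the empty set) is correct. But your handling of the final ``partial'' step contains a real gap, not just bookkeeping.

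The gap: after $s$ iterations you have a closed set $F_s$ with $H(F_s) \le s\kappa_B = k-1-b$, and you want to invoke Lemma~\ref{lemma:correction} with $\alpha = b$ (this is the right value; your $\alpha = b+1$ is off by one, since you want $H(F_c) \le k-1$, not $\le k$). That lemma requires a \emph{single} repair set $R$ with $H(R) - H(F_s \cap R) \ge b$. You assert that such an $R$ exists ``by choosing $i$ appropriately,'' but this is not guaranteed when $b > 1$: after $F_s$ has absorbed several repair sets, it is entirely possible that \emph{every} remaining repair set overlaps $F_s$ so heavily that each contributes only $1$ or $2$ fresh dimensions, even though collectively they contribute enough. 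A single repair set need not. To close the gap you would have to replay the greedy-accumulation argument from inside the proof of Lemma~\ref{lemma:recursive_correction} once more (add repair sets to $F_s$ one at a time, tracking the accumulated entropy $\gamma$, and stop to apply Lemma~\ref{lemma:correction} only when a repair set with gap $\ge b - \gamma$ appears; the termination argument and the Griesmer telescoping still go through with $b$ in place of $\kappa_B$). That is a legitimate fix, but you did not carry it out, and it is precisely what the paper's proof is structured to avoid.

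The paper sidesteps this entirely by doing the partial step \emph{first} rather than last. Starting from $F = \cl(\emptyset)$, $H(F \cap R) = 0$ for every repair set, so the hypothesis of Lemma~\ref{lemma:correction} becomes simply ``some repair set has $H(R) \ge b$.'' If true, apply Lemma~\ref{lemma:correction} with $\alpha = b$; if false, every repair set has entropy $< b$, so $\CC$ is also an $(n,k,d)(b,\delta)$-LRC and Lemma~\ref{lemma:recursive_correction} applies with $b$ in place of $\kappa_B$. Either way one gets a starting set $F_s$ with $H(F_s) \le b$ and $|F_s| \ge \GG(b, \lceil\delta/q^{\kappa_B - b}\rceil) = \GG(\kappa_B,\delta) - \GG(\kappa_B - b, \delta)$, and then $a = \lceil k/\kappa_B\rceil - 1$ iterations of Lemma~\ref{lemma:recursive_correction} finish the job. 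This is the construction already performed in the proof of Theorem~\ref{thm:CMG_bound} with $\lambda = k-1$, and the paper's proof of Theorem~\ref{thm:Singleton_G} just cites it. So while your ordering is conceptually symmetric, reversing it loses the clean dichotomy afforded by the empty starting set and leaves an unproved existence claim that needs a non-trivial argument to repair.
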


\begin{proof}
Let $a,b \in \Z$ be such that $k-1 = a\kappa_{B} + b$ with $0 \leq b < \kappa_{B}$. By Lemma \ref{lemma:recursive_correction} and the proof of Theorem \ref{thm:CMG_bound}, there is a set $I \subseteq [n]$ such that $H(I) \leq k-1$ and $|I| \geq (a+1) \GG(\kappa_{B}, \delta) + \GG(\kappa_{B}-b, \delta)$. Then, the minimum distance $d$ is bounded by
\[
d \leq n- |I| \leq n - \left\lceil \frac{k}{\kappa_{B}} \right\rceil \GG(\kappa_{B}, \delta) + \GG(\kappa_{B} - b, \delta).
\]
\end{proof}



\section{Analysis and comparisons}
\label{section:analysis}

This section is devoted to the comparison between our bounds and the previously known bounds for LRCs. In the first part, we show that the bound \eqref{eq:CMG_r} leads to the straightforward extension of the bound \eqref{eq:CM} for locality $(r,\delta)$ and the bound \eqref{eq:singl_G} leads to the Singleton-type bound \eqref{eq:Prakash} when the field size is sufficiently large. In the second part, we derive the asymptotic formulas of the bounds \eqref{eq:CMG_r} and \eqref{eq:singl_G} to obtain the bounds on the tradeoff between the rate and the relative minimum distance of LRCs for fixed locality. This also enables a cleaner comparison between the new bounds and the bound \eqref{eq:ABHMT} from \cite{agarwal18}. Notice that we do not compare our bounds to the linear programming bound derived in \cite{agarwal18} since it is impossible to derive an asymptotic formula from it and we do not assume that the repair sets are disjoint. 

Our results show that the comparison between the new asymptotic Singleton-type bound and the asymptotic version of bound \eqref{eq:ABHMT} depends on the performance of the Griesmer bound compared to the log-convex bounds. Specifically, we give some examples where our bound is better, equal, or worse than the bound \eqref{eq:ABHMT}. On the other hand, we prove that the bound \eqref{eq:CMG_r} is always better than the bound \eqref{eq:ABHMT} for large relative minimum distances by using the Plotkin bound as the intrinsic bound in \eqref{eq:CMG_r}. 

We start by showing that the bound \eqref{eq:CMG_r} leads to the straightforward extension of the bound \eqref{eq:CM} for locality $(r,\delta)$.

\begin{corollary}
Let $\CC$ be a linear $(n,k,d,r,\delta)$-LRC over $\F_{q}$. Then
\begin{equation}
\label{eq:CM_rdelta}
k \leq \min\limits_{t \in \Z_{+}} \left\{ tr + k_{\opt}^{(q)}(n-t(r+\delta-1), d) \right\}.
\end{equation}
\end{corollary}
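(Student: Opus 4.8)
The goal is to deduce the corollary, the extension of the Cadambe–Mazumdar bound \eqref{eq:CM} to locality $(r,\delta)$, from the already-established Theorem \ref{thm:CMG_rdelta}. The plan is to specialize \eqref{eq:CMG_r} by choosing a particular (non-optimal) value of $\kappa_B$ and then a particular subsequence of values of $\lambda$, and to use only the most elementary property of the Griesmer bound, namely that $\GG(m,\delta) \le m + \delta - 1$ is false in general but $\GG(m,\delta)\ge m$ together with the trivial bound on a repair set of size $r+\delta-1$ gives what we need.

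First I would take $\kappa_B = r$. This is legitimate because, as observed right before Definition \ref{def:loc_kappadelta}, every code with locality $(r,\delta)$ has dimension-locality $(\kappa=r,\delta)$; equivalently $\kappa_B=r$ is a valid (if crude) upper bound on the local dimension, since a repair set of size at most $r+\delta-1$ and minimum distance at least $\delta$ has dimension at most $r$ by the Singleton bound. Then I would restrict the minimum in \eqref{eq:CMG_r} to $\lambda = tr$ for $t \in \Z_+$, i.e. take $a=t$, $b=0$; this can only increase the right-hand side, so the resulting inequality is still valid. With $b=0$ we have $\GG(\kappa_B - b,\delta) = \GG(r,\delta)$, so the length argument of $k_{\opt}^{(q)}$ becomes $n - (t+1)\GG(r,\delta) + \GG(r,\delta) = n - t\,\GG(r,\delta)$, giving
\[
k \le \min_{t \in \Z_+} \left\{ tr + k_{\opt}^{(q)}\!\left(n - t\,\GG(r,\delta),\, d\right) \right\}.
\]
The final step is to replace $\GG(r,\delta)$ by $r+\delta-1$ inside $k_{\opt}^{(q)}$. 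Since $\GG(r,\delta) = \sum_{i=0}^{r-1}\lceil \delta/q^i \rceil \le \delta + (r-1) = r+\delta-1$ (the $i=0$ term is $\delta$ and each remaining term is at least $1$, but we need the \emph{reverse} inequality — in fact each term $\lceil \delta/q^i\rceil$ for $i\ge 1$ is \emph{at most} $\delta$ as well, so more carefully $\GG(r,\delta)\le \delta + (r-1)\lceil\delta/q\rceil$, which still need not be $\le r+\delta-1$ for $q=2$)… so the clean inequality I actually want is $\GG(r,\delta) \ge r+\delta-1$ is wrong too. The correct and only needed fact is simply $r+\delta-1 \ge \GG(r,\delta)$ is also not always true. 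The robust route is instead: $n - t\,\GG(r,\delta) \le n - t\cdot r$ is the wrong direction as well. Let me reconsider — the monotonicity of $k_{\opt}^{(q)}(\cdot,d)$ in the length means a \emph{smaller} length gives a \emph{smaller} bound, so to weaken \eqref{eq:CMG_r} into \eqref{eq:CM_rdelta} I need $n - t\,\GG(r,\delta) \le n - t(r+\delta-1)$, i.e. $\GG(r,\delta) \ge r+\delta-1$; but $\GG(r,\delta)=\sum_{i=0}^{r-1}\lceil\delta/q^i\rceil \ge \delta + (r-1)\cdot 1 = r+\delta-1$, which \emph{is} true. So the main (only) obstacle is this sign/monotonicity bookkeeping, and it resolves: each of the $r-1$ terms with $i\ge 1$ contributes at least $1$, and the $i=0$ term contributes $\delta$, hence $\GG(r,\delta)\ge r+\delta-1$, so $n-t\,\GG(r,\delta)\le n-t(r+\delta-1)$ and monotonicity of $k_{\opt}^{(q)}$ in its length argument finishes the proof.

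So the proof is short: specialize $\kappa_B=r$, restrict to $\lambda=tr$ ($b=0$), simplify the length to $n-t\,\GG(r,\delta)$, bound $\GG(r,\delta)\ge r+\delta-1$, and invoke monotonicity of $k_{\opt}^{(q)}$. I do not expect any real difficulty beyond making sure the direction of the Griesmer inequality and of the monotonicity line up, which is the one place an error could creep in.
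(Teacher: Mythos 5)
Your proof is correct, and it follows essentially the same route as the paper: specialize Theorem~\ref{thm:CMG_rdelta} to $\lambda=tr$, observe the Griesmer bound dominates the Singleton bound ($\GG(\kappa_B,\delta)\geq\kappa_B+\delta-1$), and invoke monotonicity of $k_{\opt}^{(q)}$ in the length. The one genuine difference is a bookkeeping simplification on your part: you set $\kappa_B=r$ outright, which forces $b=0$, $a=t$, and collapses the length argument immediately to $n-t\,\GG(r,\delta)$. The paper instead leaves $\kappa_B\leq r$ general, writes $tr=a\kappa_B+b$ with $0\leq b<\kappa_B$, and must check $a\geq t$ before applying $\GG(\kappa_B,\delta)\geq\kappa_B+\delta-1$ term by term to get $|I|\geq t(r+\delta-1)$. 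Your choice buys cleaner arithmetic at no cost, since $\kappa_B=r$ is a valid upper bound on the local dimension (by Singleton on the repair set) and the corollary's conclusion does not mention $\kappa_B$. The visible flip-flopping in the middle of your writeup about the direction of the inequality is sloppy but harmless: you land on the correct fact $\GG(r,\delta)=\sum_{i=0}^{r-1}\lceil\delta/q^i\rceil\geq\delta+(r-1)=r+\delta-1$ and the correct direction of monotonicity, so the argument as finally stated is sound.
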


This extension was presented in \cite[Remark 3]{rawat15} for codes with $(r',l)$-cooperative locality. It can be adapted to LRCs by using the fact that a code with locality $(r,\delta)$ is a code with $(r'=r,l=\delta-1)$-cooperative locality. 

\begin{proof}
Let $\kappa_{B}$ be the upper bound on the local dimension. We want to show that for all $t \in \Z_{+}$ with $t \leq \frac{k}{r}$, there is a set $I$ with $H(I)\leq tr$ and $|I| \geq t(r+\delta-1)$. For $t$ fixed, define $\lambda, a , b \in \Z_{+}$ such that $\lambda = tr = a\kappa_{B} + b$. By the same arguments as in the proof of Theorem \ref{thm:CMG_bound}, there exists a set $I$ such that $H(I) \leq \lambda =tr$. It remains to show that $|I| \geq t(r+\delta-1)$. First, we have $a \geq t$ since $a=\frac{tr-b}{\kappa_{B}} \geq \frac{t \kappa_{B} -b}{\kappa_{B}} = t - \frac{b}{\kappa_{B}}$. Now $a$ is an integer, so $a \geq \left\lceil t - \frac{b}{\kappa_{B}} \right\rceil = t$. Using the fact that the Griesmer bound is greater than or equal to the Singleton bound, \ie , $\GG(\kappa_B, \delta) \geq \kappa_B + \delta -1$,  we have
\begin{align*}
|I| & \geq a\GG(\kappa_B, \delta) + \GG(\kappa_B, \delta) - \GG(\kappa_B - b, \delta) \\
& \geq a(\kappa_B + \delta -1) + b \\
& = tr + a(\delta-1) \\
& \geq t(r+\delta-1).
\end{align*}
Hence, using Lemma \ref{lemma:contract} with this approximation on the size of $I$, we obtain the desired bound on $k$. 
\end{proof}

Now, we prove that the new Singleton-type bound can be used to obtain the bound \eqref{eq:Prakash}. 

\begin{proposition}
For any linear $(n,k,d,r,\delta)$-LRC, the bound \eqref{eq:singl_G} is at least as strong as the bound~\eqref{eq:Prakash}. 
\end{proposition}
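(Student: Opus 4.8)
The plan is to compare the two bounds directly, as functions of the parameters $n,k,r,\delta,q$ and of the chosen local‑dimension estimate $\kappa_{B}$. After cancelling the common summand $n$, the assertion that \eqref{eq:singl_G} is at least as strong as \eqref{eq:Prakash} becomes the purely arithmetic inequality
\[
\left\lceil \frac{k}{\kappa_{B}} \right\rceil \GG(\kappa_{B},\delta) - \GG(\kappa_{B}-b,\delta) \;\geq\; (k-1) + \left( \left\lceil \frac{k}{r} \right\rceil - 1 \right)(\delta-1),
\]
with $b = k-1-\bigl(\lceil k/\kappa_{B}\rceil-1\bigr)\kappa_{B}$ as in Theorem~\ref{thm:Singleton_G}. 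I would set $a := \lceil k/\kappa_{B}\rceil - 1$, observe that $a=\lfloor(k-1)/\kappa_{B}\rfloor$ so that $k-1=a\kappa_{B}+b$ with $0\leq b<\kappa_{B}$ exactly as in the theorem, and then reduce everything to lower‑bounding the left‑hand side.

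Two elementary observations about the Griesmer function do all the work. First, a repair set is a linear code of length at most $r+\delta-1$ with minimum distance at least $\delta$, so the Singleton bound gives maximal local dimension at most $(r+\delta-1)-\delta+1=r$; hence $\kappa_{B}$ may be taken $\leq r$, whence $\lceil k/\kappa_{B}\rceil \geq \lceil k/r\rceil$, i.e. $a \geq \lceil k/r\rceil - 1$. Second, from the closed form $\GG(m,\delta)=\sum_{i=0}^{m-1}\lceil \delta/q^{i}\rceil$ one reads off $\GG(\kappa_{B},\delta)\geq \kappa_{B}+\delta-1$ (the Griesmer bound dominates the Singleton bound, as already used in the excerpt) together with
\[
\GG(\kappa_{B},\delta)-\GG(\kappa_{B}-b,\delta)=\sum_{i=\kappa_{B}-b}^{\kappa_{B}-1}\left\lceil \frac{\delta}{q^{i}}\right\rceil \;\geq\; b ,
\]
since the middle sum has $b$ terms, each at least $1$. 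Combining these,
\begin{align*}
\left\lceil \frac{k}{\kappa_{B}} \right\rceil \GG(\kappa_{B},\delta)-\GG(\kappa_{B}-b,\delta)
&= a\,\GG(\kappa_{B},\delta)+\bigl(\GG(\kappa_{B},\delta)-\GG(\kappa_{B}-b,\delta)\bigr)\\
&\geq a(\kappa_{B}+\delta-1)+b = (k-1)+a(\delta-1)\\
&\geq (k-1)+\left(\left\lceil \frac{k}{r}\right\rceil-1\right)(\delta-1),
\end{align*}
which is exactly the inequality displayed above; substituting it back into \eqref{eq:singl_G} produces \eqref{eq:Prakash}.

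I do not expect a genuine obstacle here: the whole argument is this short chain of inequalities. The only points that need care are the integer bookkeeping — checking that $a=\lceil k/\kappa_{B}\rceil-1=\lfloor(k-1)/\kappa_{B}\rfloor$ so that $k-1=a\kappa_{B}+b$ is consistent with the $b$ of Theorem~\ref{thm:Singleton_G} — and making explicit that the choice $\kappa_{B}\leq r$ is legitimate, which it is because the Singleton estimate on a length‑$(r+\delta-1)$, minimum‑distance‑$\delta$ code already supplies $r$ as a valid upper bound on the local dimension. (If one instead allowed a ``valid'' but weak $\kappa_{B}>r$, the comparison could fail at the ceiling step, so it is worth recording this mild assumption at the start of the proof.)
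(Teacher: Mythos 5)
Your proof is correct and follows essentially the same route as the paper's: both set $k-1 = a\kappa_B + b$, both use $\kappa_B \le r$ to get $\lceil k/\kappa_B\rceil \ge \lceil k/r\rceil$, and both use that the Griesmer bound dominates the Singleton bound. The paper packages this by expanding $\GG(\kappa_B,\delta)=\kappa_B+\sum_{i=0}^{\kappa_B-1}\lfloor(\delta-1)/q^i\rfloor$ and matching terms, whereas you keep $\GG$ opaque and invoke $\GG(\kappa_B,\delta)\ge \kappa_B+\delta-1$ together with $\GG(\kappa_B,\delta)-\GG(\kappa_B-b,\delta)\ge b$, a marginally cleaner bookkeeping of the same estimates; your explicit remark that the argument requires $\kappa_B\le r$ (and that this is harmless since $r$ itself is a valid Singleton upper bound on the local dimension) is a small but worthwhile clarification that the paper leaves implicit.
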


\begin{proof}
We rewrite the bound of Theorem \ref{thm:Singleton_G} to have something closer to the form of the bound \eqref{eq:Prakash}. First, we rewrite the Griesmer bound as
\begin{align*}
\GG(\kappa_{B}, \delta) & = \sum\limits_{i=0}^{\kappa_{B} - 1}\left\lceil \frac{\delta}{q^{i}} \right\rceil  = \kappa_{B} + \sum\limits_{i=0}^{\kappa_{B} - 1}\left( \left\lceil \frac{\delta}{q^{i}} \right\rceil -1 \right) \\
& = \kappa_{B} + \sum\limits_{i=0}^{\kappa_{B} - 1}\left\lfloor \frac{\delta-1}{q^{i}} \right\rfloor.
\end{align*}
Let $a,b \in \Z$ be such that $k-1=a\kappa_{B} + b, 0 \leq b < \kappa_{B}$. The bound of Theorem \ref{thm:Singleton_G} can be transformed as follows:
\begin{small}
\begin{align*}
d & \leq n - a\GG(\kappa_{B}, \delta) - \GG(\kappa_{B}, \delta) + \GG(\kappa_{B} - b, \delta) \\
& = n - a \left( \kappa_{B} + \sum\limits_{i=0}^{\kappa_{B} -1}\left\lfloor \frac{\delta-1}{q^{i}} \right\rfloor \right) - \left( b + \sum\limits_{i=\kappa_{B} - b}^{\kappa_{B} -1}\left\lfloor \frac{\delta-1}{q^{i}} \right\rfloor  \right) \\
& = n - k + 1 - \left( \left\lceil \frac{k}{\kappa_{B}} \right\rceil -1 \right) \sum\limits_{i=0}^{\kappa_{B} -1}\left\lfloor \frac{\delta-1}{q^{i}} \right\rfloor \\
& \hspace{20pt}- \sum\limits_{i=\kappa_{B} - b}^{\kappa_{B} -1}\left\lfloor \frac{\delta-1}{q^{i}} \right\rfloor. \\
\end{align*}
\end{small}
By using the fact that $\kappa_{B} \leq r \Rightarrow  \left\lceil \frac{k}{\kappa_{B}} \right\rceil \geq  \left\lceil \frac{k}{r} \right\rceil$ and $\sum\limits_{i=0}^{\kappa_{B} -1}\left\lfloor \frac{\delta-1}{q^{i}} \right\rfloor \geq \delta-1$, we obtain
\begin{align*}
n - & k + 1 - \left( \left\lceil \frac{k}{\kappa_{B}} \right\rceil -1 \right) \sum\limits_{i=0}^{\kappa_{B} -1}\left\lfloor \frac{\delta-1}{q^{i}} \right\rfloor - \sum\limits_{i=\kappa_{B} - b}^{\kappa_{B} -1}\left\lfloor \frac{\delta-1}{q^{i}} \right\rfloor \\
& \leq n-k+1 - \left( \left\lceil \frac{k}{r} \right\rceil-1 \right)(\delta-1).
\end{align*} 
\end{proof}

This shows that the bounds \eqref{eq:CMG_r} and \eqref{eq:singl_G} are at least as good as the bounds \eqref{eq:CM_rdelta} and \eqref{eq:Prakash}, respectively. Furthermore, we can see that the bounds \eqref{eq:CMG_r} and \eqref{eq:singl_G} improve on the previous bounds when $\kappa_{B} < r$ or when $\delta>q$. The latter case is of particular interest for small alphabets. For example, when considering binary LRCs, the new bounds are already better than the bound \eqref{eq:Prakash} for all $\delta\geq 3$. 

\subsection{Asymptotic regime}

For the rest of this section, we look at the asymptotic regime where $n \to \infty$. Let $\RR=k/n$ be the rate of a code and $\delta_{n} = d/n$ its relative minimum distance. The goal is to obtain the bounds on the tradeoff between the rate and the relative minimum distance of LRCs when the locality $(r,\delta)$ is fixed and $n \to \infty$. This also makes the comparison to the bound \eqref{eq:ABHMT} easier. 

We start with the Singleton-type bound \eqref{eq:singl_G}. By dividing the bound \eqref{eq:singl_G} by $n$ and letting $n \to \infty$, its asymptotic formula is as follows :
\begin{equation}
\label{eq:Single_G_asympt}
\RR \leq \frac{\kappa_{B}}{\GG(\kappa_{B},\delta)}(1-\delta_{n}) + o(1).
\end{equation}

Following the same method, we can derive the asymptotic version of the bound \eqref{eq:ABHMT}. For the ease of reading, we reproduce here the bound  : For any $(n,k,d,r,\delta)$-LRC over $\F_{q}$ and $B_{l-c}(n,d)$ a bound on the cardinality of a code, which is log-convex in $n$ and such that $B_{l-c}(0,d)=1$, we have
\[
k \leq \left( \left\lceil \frac{n-d+1}{r+\delta-1} \right\rceil +1 \right) \log_{q} B_{l-c}(r+\delta-1,\delta).
\]

Its asymptotic version is therefore : 
\begin{equation}
\label{eq:ABHMT_asympt}
\RR \leq \frac{\log_{q} B_{l-c}(r+\delta-1,\delta)}{r + \delta -1} (1-\delta_{n}) + o(1).
\end{equation}

The following table summarizes the asymptotic formulas for the Singleton-type bounds with different locality assumptions. Notice that the last three are truly comparable since they share the same locality assumptions. When looking at the table, we can see how the locality assumption reduces the rate by the fraction of the local dimension over the local size.

\begin{center}
{\renewcommand{\arraystretch}{1.2}
\begin{tabular}{|c|c|}
\hline 
Singleton bound & $\RR \leq 1 - \delta_{n} + o(1)$ \\ 
\hline 
\makecell{Gopalan et al.\\ \cite{gopalan12}} & $\RR \leq \frac{r}{r+1}(1-\delta_{n}) + o(1)$ \\ 
\hline 
\makecell{Prakash et al.\\ \cite{prakash12}} & $\RR \leq \frac{r}{r+\delta -1}(1-\delta_{n}) + o(1)$ \\ 
\hline 
\makecell{Agarwal et al.\\ \cite{agarwal18}} & $\RR \leq \frac{\log_{q} B_{l-c}(r+\delta-1,\delta)}{r+\delta-1}(1-\delta_{n}) + o(1)$ \\
\hline
Theorem \ref{thm:Singleton_G} & $\RR \leq \frac{\kappa_{B}}{\GG(\kappa_{B}, \delta)}(1-\delta_{n}) + o(1)$ \\ 
\hline 
\end{tabular}}
\end{center}

Following the method in \cite{cadambe15}, we can derive the asymptotic formula for the bound \eqref{eq:CMG_r}. Define $R_{\opt}(\delta_{n}) = \lim\limits_{n \to \infty} \frac{k_{opt}^{(q)}(n,\delta_{n}n)}{n}$. By dividing the bound \eqref{eq:CMG_r} by $n$ , we obtain its asymptotic version 
\begin{equation}
\label{eq:CMG_asympt}
\RR \leq \min\limits_{0 \leq x < \frac{1}{\nu}} x + \left(1-x\nu \right) R_{\opt}\left( \frac{\delta_{n}}{1-x\nu} \right),
\end{equation}
where $\nu = \GG(\kappa_{B},\delta)/\kappa_{B}$.

We can now compare the asymptotic formulas between \eqref{eq:Single_G_asympt}, \eqref{eq:CMG_asympt}, and \eqref{eq:ABHMT_asympt}. Notice that for linear codes,  $\log_{q} B_{l-c}(r+\delta-1,\delta)$ is a bound on the dimension of a repair set. From now on, we denote by $\kappa_{A}$ the bound $\kappa_{A}:=\log_{q} B_{l-c}(r+\delta-1,\delta)$. By definition, $\kappa_{A}$ gives a valid upper bound on the local dimension in Theorem \ref{thm:CMG_rdelta}. However, the best upper bound on the dimension of a code is not necessarily log-convex and hence, $\kappa_{B}\leq \kappa_{A}$. In particular, the Griesmer bound on the cardinality of a code is not a log-convex function on the length $n$ as demonstrated next.

Remember that a positive function $f(j)$ of the integer argument is called log-convex if $f(j_{1})f(j_{2}) \leq f(j_{1}-1)f(j_{2}+1)$ for any $j_{1}\leq j_{2}$ in the support of $f$. For any $[n,k,d]$ linear code $\CC$ over $\F_{q}$, the Griesmer bound on $k$ given $n$ and $d$, denoted by $\GG_{k}(n,d)$, is obtained by taking the maximal $k' \in \Z_{+}$ such that $\GG(k',d) \leq n$. Thus, it gives a bound on the cardinality, $|\CC| \leq q^{k'}$. Let us consider the parameters $n_{1}=n_{2}=8$, $d=5$, and $q=2$. Then, we obtain
\begin{align*}
&\GG_{k_{1}}(8,5)=2, \\
&\GG_{k'_{1}}(7,5)=1, \\
&\GG_{k'_{2}}(9,5)=2.
\end{align*}
Hence, we have $2^{k_{1}}2^{k_{2}} = 2^{4} > 2^{k'_{1}}2^{k'_{2}}=2^{3}$ and the Griesmer bound on the cardinality of a code is not log-convex on $n$. Therefore, there is no obvious answer to the comparison between the bounds \eqref{eq:Single_G_asympt} and \eqref{eq:ABHMT_asympt} since we need to compare $\frac{\kappa_{B}}{\GG(\kappa_{B},\delta)}$ and $\frac{\kappa_{A}}{r+\delta-1}$, and both the numerator and the denominator of the former are smaller than or equal to their respective correspondents in the latter. 

To be more specific, the comparison between the two bounds \eqref{eq:Single_G_asympt} and \eqref{eq:ABHMT_asympt} mainly depends on the performance of the Griesmer bound compared to the log-convex bounds. For example, if there exists a log-convex bound such that $\kappa_{B}=\kappa_{A}$ but $\GG(\kappa_{B},\delta)<r+\delta-1$, then the bound \eqref{eq:ABHMT_asympt} is strictly better than the new Singleton-type bound. This is illustrated in Figure \ref{fig:ABHMT_better}, which displays the rate--distance tradeoff for binary codes with locality $(6,3)$. To evaluate the local dimension, we use the Hamming bound as a log-convex bound to get $\kappa_{A}=4$, which is optimal. The Griesmer bound gives $\GG(4,3)=7<8$. Hence the green line representing the bound \eqref{eq:ABHMT_asympt} is better than the orange line displaying the bound \eqref{eq:Single_G_asympt}. 

\begin{figure}
\centering
\includegraphics[height=6.5cm]{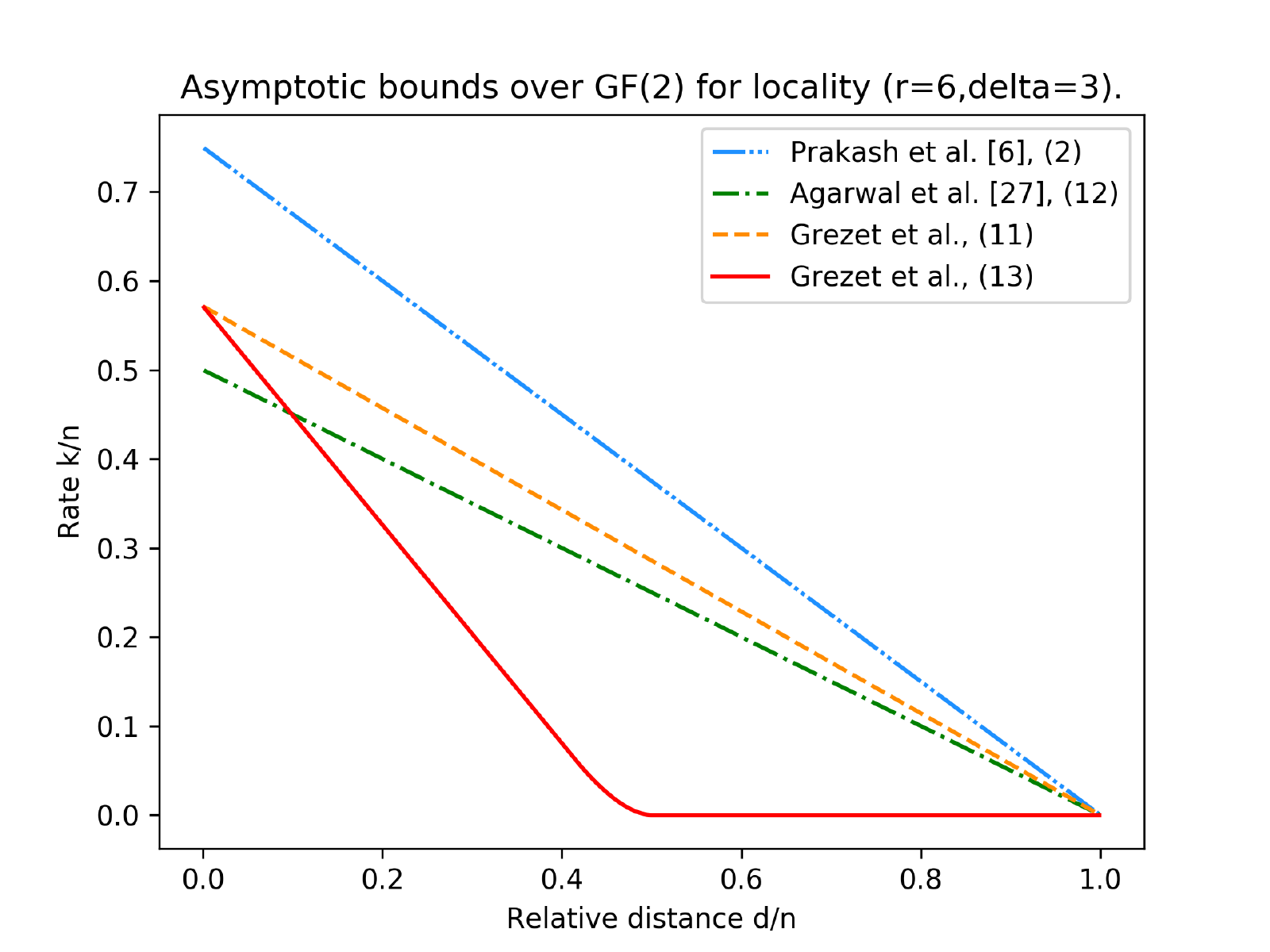}
\caption{Comparison of the asymptotic upper bounds on the rate $k/n$ from \cite{prakash12, agarwal18}, and the bounds \eqref{eq:Single_G_asympt} and  \eqref{eq:CMG_asympt} over the binary field with fixed locality $(r=6,\delta=3)$.}
\label{fig:ABHMT_better}
\end{figure}

On the other hand, if $\GG_{k}(r+\delta-1,\delta)< \kappa_{A}$ for all log-convex bounds then the bound \eqref{eq:Single_G_asympt} is strictly better than the bound \eqref{eq:ABHMT_asympt} because we have $\GG(\kappa_{A},\delta)>r+\delta-1$. Since it is impossible to give a proper example due to the fact that we would need to prove it for all log-convex bounds, we restrict here the comparison between the two bounds by considering the three bounds proven to be log-convex in \cite{agarwal18}, namely the Singleton, Hamming and Plotkin bounds. Let $\CC$ be a linear LRC with locality $(12,9)$. The Singleton bound gives an upper bound on the local dimension of 12 and the Hamming bound gives a bound of 7. The Plotking bound is not applicable here since $\delta<(r+\delta-1)/2$. Now, the Griesmer bound on the dimension of a code gives an upper bound of $5$ and is therefore better than the Hamming bound. As we can see in Figure \ref{fig:Singl_G_better} displaying the asymptotic bounds for binary codes with locality $(12,9)$, the bound \eqref{eq:Single_G_asympt} in orange is always better than the bound \eqref{eq:ABHMT_asympt} in green.

\begin{figure}
\centering
\includegraphics[height=6.5cm]{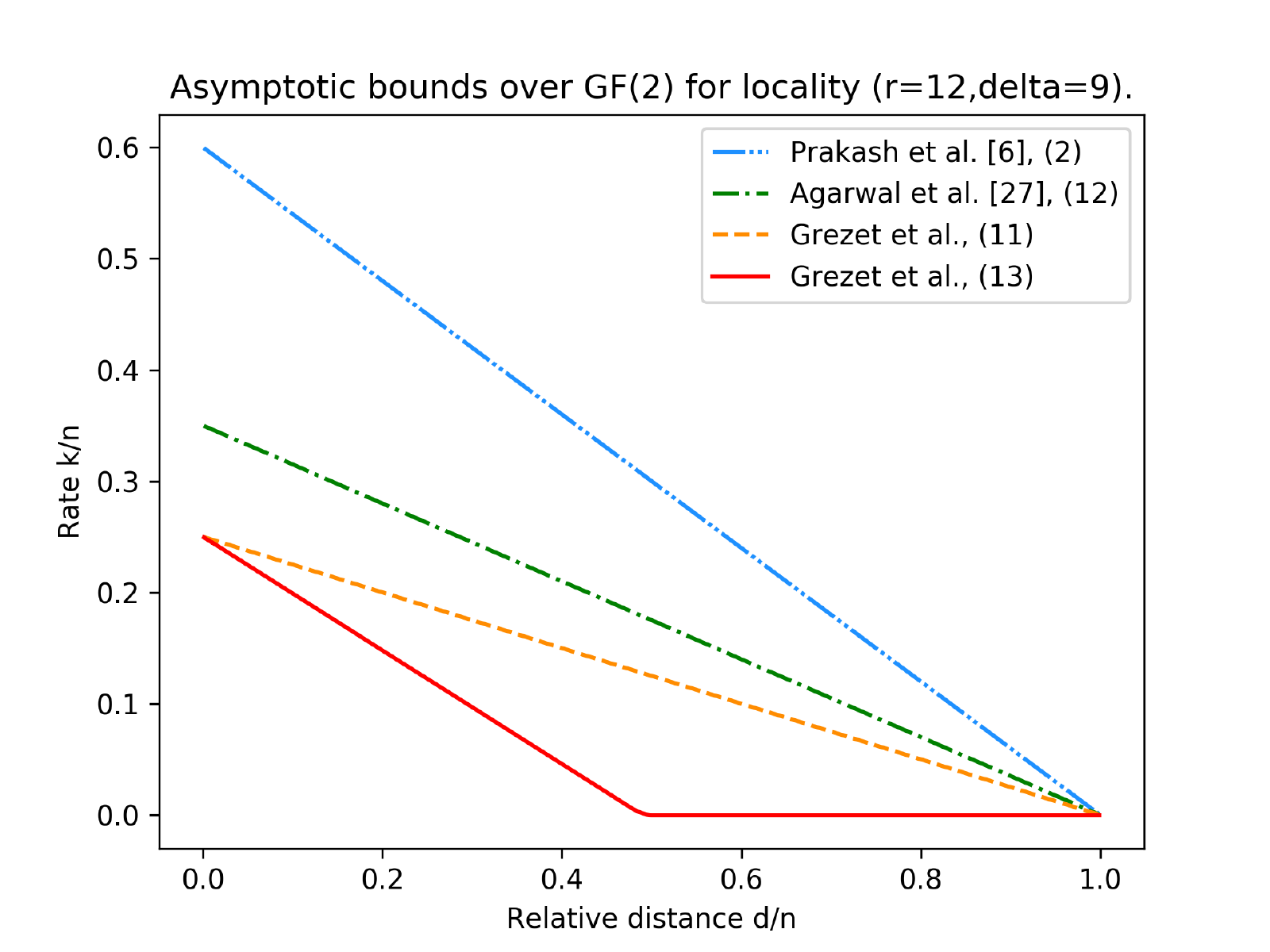}
\caption{Comparison of the asymptotic upper bounds on the rate $k/n$ from \cite{prakash12, agarwal18}, and the bounds \eqref{eq:Single_G_asympt} and  \eqref{eq:CMG_asympt} over the binary field with fixed locality $(r=12,\delta=9)$.}
\label{fig:Singl_G_better}
\end{figure}

Finally, if $\kappa_{A}=\kappa_{B}$ and $\GG(\kappa_{A},\delta)=r+\delta-1$, the two bounds become the same. This happens, for example, in Figure \ref{fig:asympt_comp} where the locality is $(4,3)$ and both the Griesmer bound and the Plotkin bound give $\kappa_{A}=3$ and $\GG(3,3)=6=r+\delta-1$. 

Nonetheless, these are just special cases of the comparison between the bounds \eqref{eq:Single_G_asympt} and \eqref{eq:ABHMT_asympt} and the final comparison needs to be done case-specific. In particular, it depends on three parameters impossible to compute theoretically, which are the best upper bound on the dimension of a code $\kappa_{B}$, the best log-convex upper bound on the dimension of a code $\kappa_{A}$, and the performance of the Griesmer bound regarding both $\kappa_{A}$ and $\kappa_{B}$. 

The comparison between the bound \eqref{eq:ABHMT_asympt} and the bound \eqref{eq:CMG_asympt} is more straightforward. Since the latter is at least as good as the bound \eqref{eq:Single_G_asympt}, we automatically get that the bound \eqref{eq:CMG_asympt} is stronger than the bound \eqref{eq:ABHMT_asympt} when the new Singleton-type is stronger than or equal to the bound \eqref{eq:ABHMT_asympt}. Furthermore, we will see that the bound \eqref{eq:CMG_asympt} is always stronger than the bound \eqref{eq:ABHMT_asympt} for large relative minimum distances, \ie , there is a threshold value $\delta_{t}$ such that for all relative minimum distances $\delta_{n} \in [\delta_{t},1]$, the bound \eqref{eq:CMG_asympt} is  better than the bound \eqref{eq:ABHMT_asympt}.

To prove this, we use the asymptotic Plotkin bound \cite[Theorem 5.2.5]{vanLint99} for $R_{\opt}$ given by

\[
R_{\opt}(\delta_{n}) \leq 1 - \frac{q}{q-1}\delta_{n} + o(1).
\]

Combining it with the bound \eqref{eq:CMG_asympt} and solving the optimization problem yields the following bound on the rate
\begin{equation}
\label{eq:CMG_Pltk_asympt}
\RR \leq \frac{\kappa_{B}}{\GG(\kappa_{B},\delta)} \left(1-\frac{\delta_{n}}{1-1/q} \right) + o(1).
\end{equation}

We can now state formally our claim.

\begin{proposition}
\label{prop:asympt_threshold}
Let $\CC$ be a linear $(n,k,d,r,\delta)$-LRC and $\kappa_{A}$ be the bound on the local dimension given by the best log-convex bound $B_{l-c}(n,d)$. Assume that $\GG(\kappa_{A},\delta) < r+\delta-1$ and let 
\[
\delta_{t} := \frac{1}{1+\frac{1}{q-1} \left( \dfrac{1}{1-\frac{\GG(\kappa_{A},\delta)}{r+\delta-1}} \right)}.
\]
Then the bound \eqref{eq:CMG_Pltk_asympt} is stronger than the bound \eqref{eq:ABHMT_asympt} for all relative minimum distances $\delta_{n} \in (\delta_{t},1]$. 
\end{proposition}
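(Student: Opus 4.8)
The plan is to exploit the fact that the bound \eqref{eq:CMG_Pltk_asympt} is valid for \emph{every} admissible upper bound $\kappa_B$ on the local dimension, together with the fact that $\kappa_A=\log_q B_{l-c}(r+\delta-1,\delta)$ is itself an admissible bound on the local dimension for Theorem \ref{thm:CMG_rdelta}. Instantiating \eqref{eq:CMG_Pltk_asympt} with $\kappa_B=\kappa_A$ yields
\[
\RR \leq \frac{\kappa_A}{\GG(\kappa_A,\delta)}\Bigl(1-\frac{\delta_n}{1-1/q}\Bigr)+o(1),
\]
and since \eqref{eq:ABHMT_asympt} reads $\RR \leq \frac{\kappa_A}{r+\delta-1}(1-\delta_n)+o(1)$, it suffices to show that
\[
\frac{\kappa_A}{\GG(\kappa_A,\delta)}\Bigl(1-\frac{q}{q-1}\delta_n\Bigr) < \frac{\kappa_A}{r+\delta-1}(1-\delta_n)
\]
for all $\delta_n\in(\delta_t,1]$, where I used $1/(1-1/q)=q/(q-1)$. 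The point of choosing $\kappa_B=\kappa_A$ is precisely that both sides now carry the common factor $\kappa_A$, so the comparison reduces to a statement about the scalars $\GG(\kappa_A,\delta)$ and $r+\delta-1$ only.

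Next I would cancel $\kappa_A>0$ and clear the positive denominators, turning the target into
\[
(r+\delta-1)\Bigl(1-\frac{q}{q-1}\delta_n\Bigr) < \GG(\kappa_A,\delta)(1-\delta_n).
\]
Collecting the $\delta_n$-terms, this is equivalent to
\[
(r+\delta-1)-\GG(\kappa_A,\delta) < \Bigl(\tfrac{q}{q-1}(r+\delta-1)-\GG(\kappa_A,\delta)\Bigr)\delta_n .
\]
The hypothesis $\GG(\kappa_A,\delta)<r+\delta-1$ makes the left-hand side positive, and it makes the coefficient of $\delta_n$ on the right even larger, hence positive; therefore the inequality is equivalent to
\[
\delta_n > \frac{(r+\delta-1)-\GG(\kappa_A,\delta)}{\frac{q}{q-1}(r+\delta-1)-\GG(\kappa_A,\delta)} .
\]
A routine manipulation of the right-hand side — invert it, write $\tfrac{q}{q-1}(r+\delta-1)=(r+\delta-1)+\tfrac{1}{q-1}(r+\delta-1)$, and divide numerator and denominator by $(r+\delta-1)-\GG(\kappa_A,\delta)$ — shows that this threshold equals exactly the $\delta_t$ displayed in the statement.

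Finally, from $0<\GG(\kappa_A,\delta)<r+\delta-1$ one sees at once that $0<\delta_t<1$, so the interval $(\delta_t,1]$ is nonempty and the strict inequality derived above holds on all of it, which is the assertion. I do not anticipate any real obstacle: once the substitution $\kappa_B=\kappa_A$ is made, everything collapses to a single affine inequality in $\delta_n$ with crossover point $\delta_t$. In particular no case distinction on the sign of $1-\frac{q}{q-1}\delta_n$ is needed — when that quantity is nonpositive the bound \eqref{eq:CMG_Pltk_asympt} already forces $\RR\to 0$, which is trivially at least as strong as \eqref{eq:ABHMT_asympt} for $\delta_n\le 1$ — and the only genuine idea in the proof is that feeding the log-convex bound $\kappa_A$ itself into \eqref{eq:CMG_Pltk_asympt} puts the two rate bounds on a common footing.
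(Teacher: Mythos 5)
Your proof is correct and follows essentially the same path as the paper's: both instantiate the bound \eqref{eq:CMG_Pltk_asympt} with $\kappa_B=\kappa_A$ so the factor $\kappa_A$ cancels, and both reduce the comparison to an affine inequality in $\delta_n$ whose crossover point is exactly $\delta_t$. The paper runs the chain of equivalences from $\delta_n\geq\delta_t$ forward to the rate inequality, while you run the same reversible algebra from the rate inequality back to $\delta_n>\delta_t$; the content is identical.
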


\begin{proof}
The proof is given in the appendix.
\end{proof}

The proof follows from the fact that the bound \eqref{eq:CMG_Pltk_asympt} using $\kappa_{A}$ and the bound \eqref{eq:ABHMT_asympt} are two lines with different slopes and that the bound \eqref{eq:CMG_Pltk_asympt} is equal to $0$ when $\delta_{n}$ is larger than $\frac{q-1}{q}$. Thus, the two lines intersect exactly in $\delta_{t}$ and the bound \eqref{eq:CMG_Pltk_asympt} is better than the bound \eqref{eq:ABHMT_asympt} for relative minimum distances strictly greater than $\delta_{t}$. 

Finally, any bound on the rate improving on the asymptotic Plotkin bound will thus increase the size of the interval where the bound \eqref{eq:CMG_Pltk_asympt} is better than the bound \eqref{eq:ABHMT_asympt}. In particular, this is true for the rate--distance bound given in \cite{McEliece77}, which is the best known bound for binary code. The MRRW bound is as follows:
\begin{equation}
\label{eq:MRRW}
\RR(\delta_{n}) \leq h(1/2 - \sqrt{\delta_{n}(1-\delta_{n})} + o(1)),
\end{equation}
where $h(x):=-x\log_{2}x - (1-x)\log_{2}(1-x)$ is the binary entropy function.

We can thus replace the asymptotic Plotkin bound with the MRRW bound in \eqref{eq:CMG_asympt}. By numerically solving the optimization problem, we obtain the red curve in Figures \ref{fig:asympt_comp}, \ref{fig:ABHMT_better}, and \ref{fig:Singl_G_better}. We see that the bound \eqref{eq:CMG_asympt} combining with the MRRW bound improves significantly on the bound \eqref{eq:ABHMT_asympt} even when the maximal size of a repair set is larger than the Griesmer bound.


\section{Achievability results}
\label{section:simplex}

Several constructions of LRCs achieving the bound \eqref{eq:Prakash} already exist, for example in \cite{prakash12, kamath13b, rawat14b, ernvall16, westerback16}. Many of these constructions require an alphabet size to be exponential in the code length. A construction of LRCs with locality $(r,\delta)$ achieving the bound \eqref{eq:Prakash} and with linear field size was presented in \cite{tamo14}. Since the bound \eqref{eq:CMG_r} approaches the bound \eqref{eq:Prakash} for large alphabets, the bound \eqref{eq:CMG_r} is tight. In this section, we show that the bound \eqref{eq:CMG_r} is also tight for some parameter values for every fixed field size, in particular small ones, by considering the family of the Simplex codes. 

\begin{definition}
Let $n=\frac{q^{m}-1}{q-1}$ and $G_{m,q}(\CC)$ be an $m\times n$ matrix over $\F_{q}$ with non-zero pairwise independent columns. The code $\CC$ generated by $G_{m,q}(\CC)$ is called a $q$-ary Simplex code $S(m,q)$ with parameters $[(q^{m}-1)/(q-1), m , q^{m-1} ]$. 
\end{definition}

Since the Simplex codes are known to achieve the Griesmer bound, they will achieve the bound \eqref{eq:CMG_r} by taking $\lambda=0$ and using the Griesmer bound for $k_{\opt}$. Therefore, the locality parameters do not influence the optimality of the code. This is in fact true in general. If a code already achieves a bound on $[n,k,d]$ without locality constraints and has a certain locality, then it will be an optimal locally repairable code for these locality parameters by the bound \eqref{eq:CMG_r}.

The Simplex codes, as locally repairable codes with $\delta=2$, were already considered in \cite{cadambe15}. In \cite{silberstein18}, the authors used them to construct new LRCs. Here, we want to derive the locality for larger dimensions and $\delta>2$. 

The first thing to notice is that, for every coordinate $e \in [n]$, there exists a codeword $c$ in $S(m,q)$, different from the zero codeword, such that $c_{e}=0$. Indeed, it is enough to take two different codewords not a multiple of each other and subtract them in an appropriate manner to obtain the desired codeword. Since every codeword of $S(m,q)$ has the same weight, we can take the residual code associated to $c$, which is the Simplex code $S(m-1,q)$. By recursion, for every $1 \leq \kappa \leq m$, the coordinate $e$ is contained in a Simplex code $S(\kappa,q)$ obtained by a restriction of $S(m,q)$. Since the minimum distance of $S(\kappa,q)$ is $q^{\kappa-1}$, by letting $\kappa\geq 2$, we ensure that $\delta>1$. Hence, the Simplex code $S(m,q)$ has dimension-locality $\left( \kappa, q^{\kappa-1} \right)$ for all $2 \leq \kappa \leq m$. Finally, $r=\frac{q^{\kappa}-1}{q-1} - q^{\kappa-1} +1 = \frac{q^{\kappa-1}+q -2}{q-1}$. 

To get examples that achieve the bounds \eqref{eq:CMG_r} and \eqref{eq:singl_G} in a less obvious manner, we prove that all the examples presented in this paper are optimal. Let us start with Example \ref{ex:104443}, where the code $\CC$ is a binary $(10,4,4,4,3)$-LRC with $\kappa=3$. By using the Plotkin bound, we get $\kappa_{B}=\kappa=3$ and thus $\GG(3,3)=6$. We can now compute the bound \eqref{eq:singl_G}:
\[
d \leq 10 - \left\lceil \frac{4}{3} \right\rceil \cdot 6 + 6 = 4.
\]
Hence, $\CC$ is optimal. 

In Example \ref{ex:199353}, we presented a binary code with parameters $(13,6,3,5,3)$ and $\kappa=3$. Since the purpose of this example is to illustrate the fact that $\kappa_{B}$ might be not equal to $\kappa$, it is necessary here to use the bound \eqref{eq:CMG_kappa} instead of the bound \eqref{eq:CMG_r}. Let $\lambda=5=\kappa + 2$. By using the Plotkin bound for $k_{\opt}^{(2)}$, we have
\[
k \leq 5 + k_{\opt}^{(2)}(13 - 2\cdot 6 + \GG(1,3), 3) = 5 + k_{\opt}^{(2)}(4,3) = 6.
\]
Hence, this code reaches the bound \eqref{eq:CMG_kappa}.

Finally, the binary code in Example \ref{ex:103323} has parameters $(10,3,3,2,3)$ and $\kappa=1$. By using the Plotkin bound, we get $\kappa_{B}=\kappa=1$ and $G(1,\delta)=3$. Let $\lambda=2=2\kappa$. We compute the bound \eqref{eq:CMG_r} using the Plotkin bound for $k_{\opt}^{(2)}$. We have
\[
k \leq 2 + k_{\opt}^{(2)}(10 - 2\cdot 3, 3) = 2 + k_{\opt}^{(2)}(4,3) = 3.
\]
Hence, the code is an optimal LRC.

Interestingly enough, to prove the optimality of both codes from Examples \ref{ex:199353} and \ref{ex:103323}, we used a set of entropy $k-1$ in the bounds \eqref{eq:CMG_kappa} and \eqref{eq:CMG_r}. However, none of the codes reaches the Singleton-type bound \eqref{eq:singl_G} obtained via a set of the same entropy. This is because we have an extra dependency on the field size by the bound $k_{\opt}^{(q)}$. Indeed there is no binary code with parameters $[4,2,3]$, while there is already an MDS code satisfying these parameters over $\F_{3}$. 

Finding a good family of LRCs achieving the bounds \eqref{eq:CMG_kappa} or \eqref{eq:CMG_r} is left for future work.

\section{Conclusion}
In this paper, we highlighted the fact that the parameter $r$ of an LRC with locality $(r,\delta)$ is only a loose upper bound on the number of nodes required for a local repair when the repair sets do not yield MDS codes. To remedy this, we proposed a new definition of locality called dimension-locality that restricts the dimension of the local codes instead of the size of the repair sets. We derived a new alphabet-dependent bound for LRCs of the same form as the bound in \cite{cadambe15}, which intrinsically uses the Griesmer bound. We showed that the bound is at least as good as the extension of the bound \cite{cadambe15} for LRCs with locality $(r,\delta)$ and the Singleton-type bound. We derived an asymptotic upper bound on the rate--distance tradeoff of LRCs and proved that the bound is the tightest known upper bound for large relative minimum distances. Finally, we derived the locality parameters of the Simplex codes for every local dimension to show that our bound is tight for every field size.

\section{Acknowledgment}

The authors are grateful to the anonymous reviewers and the associate editor for a number of very helpful comments that improved the presentation and quality of this paper. The first author would like to thank the Department of Electrical and Computer Engineering at the Technical University of Munich for hosting him while part of this work was carried out.

\bibliographystyle{IEEEtran}
\bibliography{IEEEabrv,referencesV2}


%

\appendix

%

\begin{proof}[Proof of Lemma \ref{lemma:correction}]
Remember that $H(R) = H(\cl(R))$ and $d_{C|\cl(R)} \geq \delta$. Now there exists $S$ in the \textit{res}-chain of $\cl(R)$ such that $H(S) - H(S \cap F) = \alpha$. Indeed if $S_{i}$ and $S_{i+1}$ are two consecutive sets in the \textit{res}-chain of $\cl(R)$, \ie,  $S_{i+1} \subset S_{i}$ and $H(S_{i+1})=H(S_{i})-1$, then we have
\begin{align*}
H(S_{1}) - H(F \cap S_{1}) & = H(S_{2}) + 1 - H(F \cap S_{1}) \\
& \leq H(S_{2}) + 1 - H(F \cap S_{2}).
\end{align*}

Therefore, the entropy difference between $S_{1}$ and its intersection with $F$ increases by at most $1$ compared to the entropy difference between $S_{2}$ and $F$. Since the last set in the \textit{res}-chain of $\cl(R)$ has entropy equal to $0$, we reach all possible integers in between $0$ and $H(\cl(R))-H(F \cap \cl(R))$ by evaluate every set in the \textit{res}-chain of $\cl(R)$.

By construction of the \textit{res}-chain of $\cl(R)$, we have $d_{S} \geq \left\lceil \dfrac{d_{\cl(R)}}{q^{H(R)-H(S)}} \right\rceil \geq \left\lceil \dfrac{\delta}{q^{H(R)-H(S)}}\right\rceil$. Thus, we can estimate the size of the added part via the Griesmer bound to have
\begin{align*}
|S|-|F \cap S| & \geq \GG(H(S)-H(F \cap S), d_{S}) \\
& \geq \GG \left( H(S)-H(F \cap S), \left\lceil \frac{\delta}{q^{H(R)-H(S)}}\right\rceil \right).
\end{align*}

Now since $H(S) \geq \alpha$, we have $H(R)-H(S) \leq \kappa - \alpha$. Hence we have

\[
\GG \left( H(S)-H(F \cap S), \left\lceil \frac{\delta}{q^{H(R)-H(S)}}\right\rceil \right) \geq \GG \left( \alpha, \left\lceil \frac{\delta}{q^{\kappa - \alpha}} \right\rceil  \right).
\]

Thus, by defining $F'=\cl(F \cup S)$, we have indeed
\begin{itemize}
\item $H(F') \leq H(F) + H(S) - H(F \cap S) = H(F) + \alpha$,
\item $|F'|\geq |F| + |S|- |F \cap S| \geq |F|  + \GG \left( \alpha, \left\lceil \dfrac{\delta}{q^{\kappa - \alpha}} \right\rceil \right) $. 
\end{itemize}
\end{proof}

\begin{proof}[Proof of Lemma \ref{lemma:recursive_correction}]
The idea of the proof is the following. We first add repair sets to $F$ as long as they do not increase the entropy too much. When there is one repair set that, when added, exceeds the bound on the entropy, we use Lemma \ref{lemma:correction} to add a smaller part and complete the set. We then prove that the corrected set has the desired size. 

We first define recursively the set that contains $F$ and some repair sets conditioned on the fact that there is no repair sets satisfying the hypotheses of Lemma \ref{lemma:correction}. The construction is the following.

\begin{enumerate}
\item Define $F_{0}=F$ and $\gamma_{0}=0$.
\item If for all repair sets $R$ we have $H(R)-H(F_{i-1}\cap R) < \kappa - \gamma_{i-1}$ then choose $R_{i}$ a repair set such that $\cl(R_{i}) \nsubseteq F_{i-1}$ and define
\begin{itemize}
\item $F_{i}=\cl(F_{i-1} \cup R_{i})$,
\item $\gamma_{i}=\gamma_{i-1} + H(R_{i}) - H(F_{i-1} \cap R_{i})$.
\end{itemize}
\end{enumerate}

The construction ends when there is at least one repair set with $H(R) - H(F_{i-1} \cap R) \geq \kappa - \gamma_{i-1}$. To see that the procedure always stops, assume for a contradiction that there is $j \in \Z_{+}$ such that $\{ R : R \text{ repair set } \} \subseteq F_{j}$. Since every coordinate $e \in [n]$ is contained in a repair set, we have $F_{j}=[n]$. But by construction, we have $H(R_{j}) - H(F_{j-1} \cap R_{j}) < \kappa - \gamma_{j-1}$. This implies that
\begin{align*}
k & = H(F_{j}) \leq H(F_{j-1}) + H(R_{j}) - H(F_{j-1} \cap R_{j}) \\
& < H(F_{j-1}) + \kappa - \gamma_{j-1} \\
& \leq H(F) + \gamma_{j-1} + \kappa - \gamma_{j-1} \\
& = H(F) + \kappa,
\end{align*}
which contradict the assumption that $H(F) + \kappa \leq k$. 

Let $F_{j}$ be the set obtain by the previous algorithm and let $R'$ the repair set such that $\cl(R') \nsubseteq F_{j}$ and $H(R')- H(R' \cap F_{j}) \geq \kappa - \gamma_{j}$. We first give an upper bound on the entropy and a lower bound on the size of $F_{j}$. By construction, we have $H(F_{j}) \leq H(F) + \gamma_{j}$. For the size of $F_{j}$, we will use the following approximation of $\GG(a,\delta) + \GG(b,\delta)$ for $a,b \in \Z_{+}$. We have
\[
\GG(a,\delta) + \GG(b,\delta) \geq \GG(a,\delta) + \GG \left( b,\left\lceil \frac{\delta}{q^{a}} \right\rceil \right) = \GG(a+b, \delta).
\] 
We can now give an estimation of the size of $F_{j}$.
\begin{align*}
|F_{j}| & \geq |F_{j-1}| + |R_{j}|-|F_{j-1} \cap R_{j}| \\
& \geq |F_{j-1}| + \GG(\gamma_{j}-\gamma_{j-1}, \delta) \\
& \geq |F| + \sum\limits_{i=1}^{j}\GG(\gamma_{i}-\gamma_{i-1}, \delta) \\
& \geq |F| + \GG(\sum\limits_{i=1}^{j} (\gamma_{i} - \gamma_{i-1}), \delta) \\
& \geq |F| + \GG(\gamma_{j}, \delta).
\end{align*}

Now we can apply Lemma \ref{lemma:correction} to $F_{j}$ by letting $\alpha = \kappa - \gamma_{j}$. Let $F_{c}$ be the set obtained from Lemma \ref{lemma:correction}. We check that $F_{c}$ has indeed the desired entropy and size. For the entropy, we have
\[
H(F_{c}) \leq H(F_{j}) + \kappa - \gamma_{j} \leq H(F) + \kappa.
\]
For the size of $F_{c}$, we have
\begin{align*}
|F_{c}| & \geq |F_{j}| + \GG \left( \kappa - \gamma_{j}, \left\lceil \dfrac{\delta}{q^{\gamma_{j}}} \right\rceil \right) \\
 & \geq |F| + \GG(\gamma_{j}, \delta) + \GG \left( \kappa-\gamma_{j}, \left\lceil \dfrac{\delta}{q^{\gamma_{j}}} \right\rceil \right) \\
  & = |F| + \GG(\kappa, \delta).
\end{align*}

Hence $F_{c}$ has both the required entropy and size and thus this concludes the proof. 
\end{proof}

\begin{proof}[Proof of Theorem \ref{thm:CMG_bound}]
Let $\lambda,a,b \in \Z_{+}$ such that $0 \leq \lambda \leq k$ and $\lambda = a\kappa + b$ with $0 \leq b < \kappa$. The proof is divided into three parts. First we prove that there exists a starting set $F_{s}$ with
\begin{itemize}
\item $H(F_{s}) \leq b$,
\item $|F_{s}| \geq \GG \left( b,\left\lceil \frac{\delta}{q^{\kappa-b}} \right\rceil \right)$.
\end{itemize}

To do this, we distinguish two cases. If there exists a repair set $R$ such that $H(R) \geq b$, then we can use Lemma \ref{lemma:correction} with $F=\cl(\emptyset)$ and $\alpha=b$ since $H(F)=0$. Define $F_{s}$ to be the set obtained from Lemma \ref{lemma:correction}. $F_{s}$ has indeed the required entropy and size. 

If there is no such repair set, it means that $b$ is a tighter upper bound on the entropy than $\kappa$. Therefore, $\CC$ is also an $(n,k,d)(b,\delta)$-LRC. By Lemma \ref{lemma:recursive_correction}, if $F = \cl(\emptyset)$, then there exists $F_{s}$ with $H(F_{s}) \leq b$ and $|F_{s}| \geq |\cl(\emptyset)| + \GG(b,\delta) \geq \GG \left( b, \left\lceil \frac{\delta}{q^{\kappa-b}} \right\rceil \right)$. 

Secondly, we iterate Lemma \ref{lemma:recursive_correction} $a$ times. Let $I$ be the final set. We have 
\begin{itemize}
\item $H(I) \leq H(F_{s}) + a\kappa \leq a\kappa + b = \lambda$,
\item $|I| \geq a\GG(\kappa, \delta) + |F_{s}| \geq (a+1)\GG(\kappa, \delta) - \GG(\kappa-b, \delta)$. 
\end{itemize}

Finally, now that we have the existence of this set, by the same argument as in \cite{cadambe15} with Lemma \ref{lemma:contract}, we can approximate the entropy of the shortened part by $k_{opt}$ with $n$ minus the bound on the size and the minimum distance $d$. Then, the best upper bound on $k$ is the minimum bound obtained over all $\lambda$. This concludes the proof. 
\end{proof}

\begin{proof}[Proof of Proposition \ref{prop:asympt_threshold}]

We use $\kappa_{a}$ as the upper bound on the dimension in \eqref{eq:CMG_Pltk_asympt} and let $\delta_{n}$ be such that $\delta_{n}\geq \delta_{t}$. We have
\begin{align*}
\delta_{n} &\geq \frac{1}{1+\frac{1}{q-1} \left( \dfrac{1}{1-\frac{\GG(\kappa_{a},\delta)}{r+\delta-1}} \right)} & \iff \\
\delta_{n} &\geq \frac{1}{1+\frac{r+\delta-1}{(q-1)(r+\delta -1 - \GG(\kappa_{a},\delta))}} & \iff \\
\delta_{n} &\geq \frac{1}{\frac{\frac{q}{q-1}(r+\delta-1) - \GG(\kappa_{a},\delta)}{r+\delta -1 - \GG(\kappa_{a},\delta)}} & \iff \\
\delta_{n} \kappa_{a} &\geq \kappa_{a} \frac{r+\delta -1 - \GG(\kappa_{a},\delta)}{\frac{q}{q-1}(r+\delta-1) - \GG(\kappa_{a},\delta)}
\end{align*}
Then, we have
\begin{flalign*}
&\delta_{n} \kappa_{a} \left(\GG(\kappa_{a},\delta) - \frac{q}{q-1}(r+\delta-1)\right) &\\
&\hspace{50pt}\leq \GG(\kappa_{a},\delta) \kappa_{a} - (r+\delta-1)\kappa_{a} &\iff \\ 
&\delta_{n}\kappa_{a} \frac{\GG(\kappa_{a},\delta) - \frac{q}{q-1}(r+\delta-1)}{\GG(\kappa_{a},\delta)(r+\delta-1)} &\\
&\hspace{50pt}\leq \frac{\GG(\kappa_{a},\delta) \kappa_{a} - (r+\delta-1)\kappa_{a}}{\GG(\kappa_{a},\delta)(r+\delta-1)} &\iff \\
&\delta_{n} \left( \frac{\kappa_{a}}{r+\delta-1} - \frac{\kappa_{a}}{\GG(\kappa_{a},\delta)}\frac{q}{q-1} \right) &\\
&\hspace{50pt}\leq \frac{\kappa_{a}}{r+\delta-1} - \frac{\kappa_{a}}{\GG(\kappa_{a},\delta)} &\iff \\ 
&\frac{\kappa_{a}}{\GG(\kappa_{a},\delta)} \left(1-\frac{\delta_{n}}{1-1/q}\right) 
 \leq \frac{\kappa_{a}}{r+\delta-1}(1-\delta_{n})
 &\\
\end{flalign*}
\end{proof}

\ifCLASSOPTIONcaptionsoff
  \newpage
\fi

\end{document}